\newenvironment{customthm}[1]
{\innercustomthm}{\endinnercustomthm}
\newenvironment{customlemma}[1]
{\innercustomlemma}{\endinnercustomlemma}
\newenvironment{claim}[1]{\par\noindent\underline{Claim:}\space#1}{}
\newenvironment{claimproof}[1]{\par\noindent\underline{Proof:}\space#1}{\hfill $\blacksquare$}
\begin{document}

\title{Sample-Efficient Regret-Minimizing Double Oracle in Extensive-Form Games}

\author{\name Xiaohang Tang \email xiaohang.tang.20@ucl.ac.uk \\
       \addr 
       Department of Statistical Science,\\
       University College London.
       \AND
       \name Chiyuan Wang \email wang2021@stu.pku.edu.cn \\
       \addr
       YuanPei College, Peking University.
       \AND
      \name Chengdong Ma \email mcd1619@outlook.com \\
       \addr
       Institute for Artificial Intelligence, Peking University.
       \AND
       \name Ilija Bogunovic \email i.bogunovic@ucl.ac.uk \\
       \addr 
       Department of Electronic and Electrical Engineering, \\ University College London.
       \AND
      \name Stephen McAleer \email smcaleer@cs.cmu.edu \\
        \addr Carnegie Mellon University.\\  
      \name Yaodong Yang$^\dagger$ \email yaodong.yang@pku.edu.cn \\
             \addr
       Institute for Artificial Intelligence, Peking University. \\
       $^\dagger$ Corresponding author.
       }

\editor{}

\maketitle

\begin{abstract}
Extensive-Form Game (EFG) represents a fundamental model for analyzing sequential interactions among multiple agents and the primary challenge to solve it lies in mitigating sample complexity.
Existing research indicated that Double Oracle (DO) can reduce the sample complexity dependence on the information set number $|S|$ to the final restricted game size $X$ in solving EFG. 
This is attributed to the early convergence of full-game Nash Equilibrium (NE) through iteratively solving restricted games.
However, we prove that the state-of-the-art Extensive-Form Double Oracle (XDO) exhibits \textit{exponential} sample complexity of $X$, due to its exponentially increasing restricted game expansion frequency.
Here we introduce Adaptive Double Oracle (AdaDO) to significantly alleviate sample complexity to \textit{polynomial} by deploying the optimal expansion frequency.
Furthermore, to comprehensively study the principles and influencing factors underlying sample complexity, we introduce a novel theoretical framework Regret-Minimizing Double Oracle (RMDO) to provide directions for designing efficient DO algorithms.
Empirical results demonstrate that AdaDO attains the more superior approximation of NE with less sample complexity than the strong baselines including Linear CFR, MCCFR and existing DO. Importantly, combining RMDO with warm starting and stochastic regret minimization further improves convergence rate and scalability, thereby paving the way for addressing complex multi-agent tasks. \looseness=-1

\end{abstract}

\begin{keywords}
  Regret Minimization, Double Oracle, Sample Complexity, Game Theory, Nash Equilibrium
\end{keywords}

\section{Introduction}

Extensive-Form Game (EFG) is one of the widely studied fundamental models in game theory ~\citep{ritzberger2016theory}, and solving its Nash equilibrium (NE) is critical for addressing sequential decision-making problems constructed by multiple agents such as board games and auction bidding ~\citep{hart1992games}. Existing work have made strides in solving EFG, notably through a series of methods based on Counterfactual Regret Minimization (CFR) ~\citep{cfr,srm,mccfr}. These methods aim to approximate Nash equilibrium by traversing all nodes (information sets) within the game tree to compute counterfactual regrets and update strategies. However, it is evident that the sample complexity of CFR methods heavily depends on the number of information sets, denoted by $|S|$. Consequently, as the scale and complexity of the game increase, the resulting sample complexity by CFR methods becomes prohibitively high which significantly improves the intractability of solving EFGs.

To efficiently solve EFGs, prior work have attempted to introduce Double Oracle (DO) paradigm ~\citep{do,sdo,xdo,odo}, which has a superior mechanism with lower complexity dependencies than CFR family. 
The core idea of DO is to approximate NE only by resolving an expanding restricted game where players can only choose actions from a subset of the action space. The restricted game is expanded by adding the original game's Best Response (BR) against the NE in the restricted game (meta-NE). Since DO's restricted game typically halts its growth in the early stages before reaching the original game, DO can reduce the sample complexity dependence from $|S|$ to the final restricted game size $X$. 
The empirical results also demonstrate this advantage that the Extensive-Form DO (XDO) ~\citep{xdo} proposed based on DO framework can converge more efficiently to a less exploitable strategy than the regret minimization algorithm, and has become a state-of-the-art algorithm for solving EFGs.
However, XDO iteratively executing regret minimization in restricted games until the local exploitability reaches a threshold $\epsilon$, which then will be halved. This may lead to explosive growth in sample complexity in some cases but it is unclear how to mitigate it under complexity theory guidance.  This motivates the core question we aim to answer:

\begin{center}
    \textit{\textbf{Q: What causes high sample complexity and how to avoid them when designing more efficient extensive-form DO algorithms?}}
\end{center}

Firstly, we introduce a unified framework, \textbf{Regret-Minimizing Double Oracle (RMDO)} to theoretically understand the sources of sample complexity within the DO framework. RMDO is a generalization of existing DO methods including DO, XDO and ODO. We derive the sample complexity for RMDO framework to reach $\epsilon$-NE:
\begin{equation}
\tilde{\mathcal{O}}( k|A|X^3/\epsilon^2 + \sum_{j=1}^k|A|X^3/\epsilon^2m(j) +  Xm(j)),
\label{eq:rmdo_sc}
\end{equation}
where $j$ is the index of restricted game, $A$ is the action space, $k$ is the number of restricted games, $X$ is the largest game size among the games constructed by the support of NEs, and $m(\cdot)$ is the \textit{frequency function} of computing Best Response added to expand the restricted game. By setting different $m(\cdot)$, RMDO can be converted to existing existing DO methods. 
Based on RMDO, We have proved that even the state-of-the-art method XDO has the \emph{exponential} sample complexity in $k$, where $k$ represents the count of restricted games and is only bounded by the number of information sets in the final restricted game, denoted by $X$. This verified the concern about the complexity explosion of XDO.

Furthermore, based on the theoretical insights of RMDO, we propose an instance of RMDO called Adaptive Double Oracle (AdaDO), which employs the theoretically optimal frequency function to alleviate concerns about exponential sample complexity. 
AdaDO exhibits \textit{polynomial} sample complexity to reach $\epsilon$-NE, which matches the complexity lower bound of RMDO framework, and thus is more sample efficient than existing DO methods including XDO. Furthermore, to reduce the complexity caused by $k$ by integrating with warm starting for strategy initialization when solving a new restricted game, AdaDO demonstrates a significant improvement in the speed of exploitability decreasing empirically. We also adopt stochastic regret minimizer, exemplified by Monte-Carlo Counterfactual Regret Minimization (MCCFR) ~\citep{srm,mccfr}, for the restricted game solving, and manage to reduce the power of $X$ in the sample complexity of AdaDO and enhance the scalability of Double Oracle methods. We present a comprehensive summary of theoretical results in Table \ref{table:sc}, where Periodic Double Oracle (PDO) is naive improved instance by setting a constant expansion frequency but suffering from tuning this constant hyperparameter. Stochastic PDO (SPDO) and Stochastic Adaptive Double Oracle (SADO) are the natural extension of PDO and AdaDO adopting stochastic regret minimization for restricted game solving.

\begin{table*}[t!]
\centering
\caption{Main theoretical results of sample complexities for RMDO instances to reach $\epsilon$-NE in extensive-form games. We categorize the algorithms into reaching NE by regret minimization (RM) and stochastic regret minimization (SRM). Denote $|S|$ as the number of infosets. Since $X$ and $k$ are only bounded by $|S|$, here we display the degree of these two dominating terms $k$ and $X$ in the Sample Complexities. Besides, it is usually that $k \gg |A|$ in theory since $|A| \sim \mathcal{O}(|S|^{1/H})$, where $H$ is the horizon of the game, but $k$ is merely upper bounded by $X$. Exp. in the column of degree indicates that the complexity is exponential in the corresponding factor.}
\small\addtolength{\tabcolsep}{-2pt}
\begin{tabular}{@{}llccc@{}}
\toprule
Reach NE via & Algorithm & Sample Complexity & $k$ & $X$ \\ \midrule
\multirow{4}{*}{RM}
 & XODO ~\citep{odo} & $\tilde{\mathcal{O}}(k^2X^3/\epsilon^2)$ & 2 & 3 \\
 & XDO ~\citep{xdo} & $\tilde{\mathcal{O}}(k|A|X^3/\epsilon^2 + 4^k|A|X^3/\epsilon_0^2)$ & Exp. & 3 \\
 & \textbf{PDO} & $\tilde{\mathcal{O}}(k|A| X^3/\epsilon^2 )$ & 1 & 3 \\
 & \textbf{AdaDO} & $\tilde{\mathcal{O}}(k|A| X^3/\epsilon^2 )$ & 1 & 3 \\ 
 \midrule
\multirow{2}{*}{SRM} 
 & \textbf{SPDO} & $\tilde{\mathcal{O}}(   k|A|X^3/\epsilon^2 )$ & 1 & 3 \\
 & \textbf{SADO} & $\tilde{\mathcal{O}}( k|A|X^2/\epsilon^2)$ & 1 & 2 \\
\bottomrule
\end{tabular}
\label{table:sc}
\end{table*}


Empirical results in representative poker games and board game Sequential Blotto have demonstrated that AdaDO significantly outperforms XDO and can converge to exploitability solutions over $10$ times less exploitable than the strong regret minimization baseline, Linear Counterfactual Regret Minimization.
Notably, we observe a substantial improvement of in AdaDO with the warm starting technique, especially in a variant of Kuhn Poker, which enables DO to converge to up to $10^8$ times less exploitable solutions than DO without warm starting. In the setting of reaching NE via stochastic regret minimization, the instances of Stochastic RMDO can also generate a significantly less exploitable strategy compared to MCCFR. These results validate that AdaDO establishes a new state-of-the-art in EFG solving and provides a promising direction for addressing complex multi-agent decision-making tasks. \looseness=-1

\section{Preliminaries and Related Works}
\subsection{Two-Player Extensive-Form Games}
In this paper, our focus centers on Two-Player Zero-Sum Extensive-Form Games (EFGs) with perfect recall (all historical events can be remembered). EFGs are depicted using a game tree, wherein nodes correspond to players $i\in \mathcal{P}=\{1,2\}$. To model stochastic events, such as card dealing in Poker, Imperfect Information EFGs utilize a \textbf{Chance Player} denoted as $\mathbf{c}$. A crucial concept is the notion of a \textbf{History} ($h$), which represents a sequence of actions taken by the players and events uniquely attached to a node on the game tree such as dealt hand cards and public cards in Texas Hold'em Poker. For a given history $h$, $A(h)$ represents the set of available actions, and $P(h)$ denotes the player required to make a decision at $h$. Denote $h\cdot a$ as the history right after taking action $a$ at history $h$. Terminal histories, comprising the set $Z$, are linked to nodes where the game's payoff can be determined. The payoff at the terminal history $z\in Z$ is denoted as $v_i(z)$, representing the value for player $i$ at $z$, with the payoff range represented by $\Delta$.

Each player $i \in \mathcal{P}$ possesses an \textbf{Information Set} (Infoset/Infostate $s_i$), which corresponds to the set of indistinguishable histories from player $i$'s perspective. For example in poker, histories where opponent has different hand cards are indistinguishable to us. The set $S_i$ contains all the infosets where player $i$ must make decisions, and $S$ represents the union of information sets for all players: $S=\cup_{i\in \mathcal{P}}S_i$. The set $A(s_i)$ consists of all available actions for player $i$ at $s_i$. Player $i$'s \textbf{Strategy} is denoted by $\pi_i$, where $\pi_i(s_i,a)$ represents the probability of player $i$ taking action $a \in A(s_i)$ at the information set $s_i$. The joint strategy of both players is represented as $\pi=(\pi_1, \pi_2)$. The \textbf{Reaching probability} $x^{\pi}(h)$ signifies the probability of reaching history $h$ when players use joint strategy $\pi$. More specifically, $x^\pi(h) = \prod_{i\in \mathcal{P} \bigcup {\mathbf{c}}}x^\pi_i(h)$, where $x^\pi_i(h)=\prod_{h^{'}\cdot a \subset h} \pi_{\mathcal{P}(h^{'})}(h^{'},a)$ represents player $i$'s contribution.

Given a joint strategy $\pi=(\pi_1, \pi_2)$, the \textbf{expected value} of history $h$ for player $i$ is denoted as $v_i(h)$. If reaching probability $x^{\pi}(h)=0$, then $v_i^\pi(h)=0$; otherwise, we have:
\begin{align}
\label{ev}
v_i^\pi(h) =
\sum_{z\in Z} \frac{x^{\pi}(z)}{x^{\pi}(h)} v_i(z),\ x^{\pi}(h)>0 ,
\end{align}
where the value (utility) function of strategy $\pi$ is defined as:
\begin{align}
v_i(\pi_i, \pi_{-i}) = \sum_{z \in Z} v_i^{\pi}(z)x^\pi(z).
\end{align}

The \textbf{best response (BR)} of player $i$ against the strategy of the other player ($\pi_{-i}$) is denoted as $\mathbb{BR}_i(\pi_{-i})=\arg\max_{\pi_i}v_i(\pi_i, \pi_{-i})$. An $\epsilon$-\textbf{Nash Equilibrium (NE)} strategy $\pi^*$ satisfies the following conditions for every $i\in \mathcal{P}$:
\begin{align}
\min_{\pi_{-i}} v_i(\pi^*_i, \pi_{-i}) + \epsilon \geq v_i(\pi^*)\geq \max_{\pi_i} v_i(\pi_i, \pi^*_{-i}) - \epsilon.
\end{align}

Particularly, an exact NE is an $\epsilon$-NE when $\epsilon=0$. The \textbf{Exploitability} $e(\pi)$ is defined as the difference between the sum of values for each player when playing the best response ($\mathbb{BR}(\pi_{-i})$) and the value achieved when players playing $\pi$. Additionally, the \textbf{Support Size} of NE ($\pi^*$) refers to the number of actions that have positive probabilities at the infoset $s$ in NE $\pi^*$, denoted by $\text{supp}^{\pi^*}(s)$.

\subsection{Regret Minimization and Stochastic Regret Minimization}

Regret minimization methods approximate NE in EFGs if the methods have a sublinear regret upper bound or an average regret converging to zero. Given $\{\pi^t|\ t=1,\cdots,T\}$ is a sequence of strategies delivered by an algorithm, the cumulative regret of this algorithm is defined as: $R^T_i = \sum_{t=1}^T \max_{\pi} v_i(\pi, \pi_{-i}^t) - v_i(\pi_i^t, \pi_{-i}^t),$ and \emph{average} regret $\bar{R}^T_i = R^T_i/T$. Counterfactual Regret Minimization (CFR) ~\citep{cfr} is a regret minimization algorithm that minimize cumulative regret by minimizing counterfactual regret at each infoset via traversing the full game tree depth-firstly. CFR has been widely studied and used to develop superhuman Poker AI ~\citep{burch2012efficient,moravvcik2017deepstack,discfr,libratus,pluribus,brown2020combining}.

To compute the counterfactual regret, we calculate player i's expected values $v_i(\cdot)$ based on equation (\ref{ev}) and the instantaneous regrets at iteration $t\leq T$ of taking action $a$ in infoset $s$ \looseness=-1 
\begin{align}
    r_i^t(s,a) = \sum_{h \in s_i}x_{-i}^{\pi^t} (h) [v_i^{\pi^t}(h\cdot a) -  v_i^{\pi^t}(h)]
    \label{eq:instant_r}
\end{align}
The counterfactual regrets of CFR can be computed by uniform average of $r_i$ over all iterations: 
\begin{align}
    R_i^T(s,a) = \sum_{t=1}^T r_i^t(s,a) / T.
    \label{eq:cfr}
\end{align}
Denote $R_i^{t,+}(s,a)=\max\{0, R_i^{t}(s,a)\}$. In two-player zero-sum games, the regret of CFR is bounded by$\Delta|S_i|\sqrt{|A_i|T}$, if both players apply \textbf{regret matching}~~\citep{cfr} to strategy updates:
\begin{align}
    \pi_i^{t+1}(s, a) = \Bigg\{
    \begin{array}{cc}
         R_i^{t,+}(s,a)/ \sum_a R_i^{t,+}(s,a) ,  &\text{if} \sum_a R_i^{t,+}(s,a)>0; \\ 
         1/|A(s)| , &\text{else}
    \end{array}
    .
\label{eq:rm}
\end{align}

CFR needs to traverse the entire game tree to estimate $v_i^{\pi^t}(h)$ in equation \ref{eq:instant_r} and develop strategy for all states according to regret matching, which can be intractable in large games. Hence rather than computing exact regret, Stochastic Regret Minimization (SRM) ~\citep{srm} samples nodes for traversal to estimate regret and then minimize regret. The family of SRM, including outcome-sampling Monte-Carlo CFR and external-sampling Monte-Carlo CFR (MCCFR), have the following regret bound with at least probability $1-p$:
\begin{align}
    {R}^T_i \leq (1 + \sqrt{\frac{2}{p}})\frac{1}{\delta}\Delta|S_i|\sqrt{|A|T},
\end{align}
for any $p\in(0,1]$ and exploration parameter $\delta>0$. Specifically, $\delta=1$ in external-sampling MCCFR ~\citep{mccfr}. Based on stochastic regret minimiziation, previous works extend tabular CFR method to DeepCFR ~\citep{brown2019deep} using neural networks as function approximators ~\citep{xdo,dream,rebel}. MCCFR has the same magnitude of regret as CFR, but its sample complexity is much lower since the complexity of each iteration in CFR is $\mathcal{O}(|S|)$ while that of MCCFR is only $\mathcal{O}(H)$, where $H$ is the depth of the tree or horizon of the trajectory. In this way, when reaching $\epsilon$-NE, we bound the upper bound of regret to get the required iterations and multiply the required iterations with the complexity of each iteration to get the sample complexity. Then it can be easily derived that the sample complexity of MCCFR is reduced by up to $\tilde{\mathcal{O}}(|S|)$.



\subsection{Double Oracle Methods}

\begin{figure*}[t]
    \centering
    \includegraphics[width=1\textwidth]{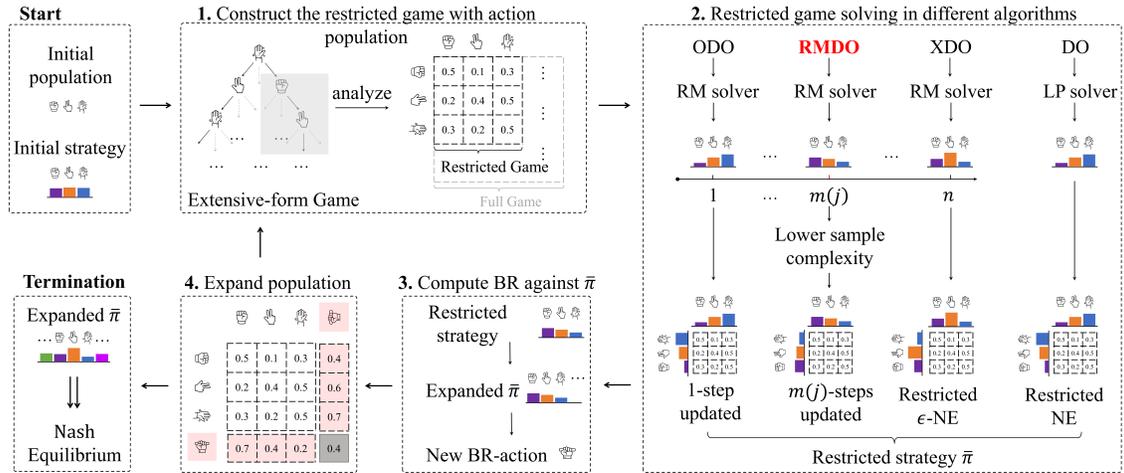}

    \caption{Flow chart of existing Double Oracle algorithms: Double Oracle (DO), Extensive-form Double Oracle (XDO), Online Double Oracle (ODO), and the method we proposed, namely Regret-Minimizing Double Oracle (RMDO).}
\label{fig:rmdo}
\end{figure*}

The Double Oracle (DO) technique ~\citep{do} originated as a method for addressing Normal-form Games (NFGs). At each time step $t$, DO maintains a population of pure strategies denoted as $\Pi_t$ for both players and proceeds to construct a restricted game by considering only the actions in $\Pi_t$. The Nash Equilibrium (NE) of this restricted game is then obtained through linear programming, and subsequently, the best response to this NE is incorporated into the population. This iterative process continues until the population reaches a stable state without further changes. Notably, DO offers a key advantage in solving a large game by focusing on solving relatively small restricted games ~\citep{do_small_sub1, do_small_sub2}.

To capitalize on the benefits offered by regret minimization methods, previous works have explored employing regret minimizers to update strategies within restricted games, rather than seeking to exactly compute a Nash Equilibrium (NE) strategy. Building on the ideas of Sequence-Form Double Oracle (SDO) ~\citep{sdo}, Extensive-Form Double Oracle (XDO) ~\citep{xdo} initializes a value $\epsilon$ as the threshold for stopping strategy updates in restricted games. It keeps executing regret minimization within the restricted game until achieving a local $\epsilon$-NE. Subsequently, $\epsilon$ is halved, and a new restricted game is constructed, where the best response against the current restricted-game NE and finally the strategy reset and repeat from the regret minimizing step. It is assured that the average strategy in the last restricted game converges to NE. In contrast, Online Double Oracle (ODO) ~\citep{odo} carries out only one iteration of regret minimization and then compute the best response against the restricted-game NE. If the best response actions are not in the restricted game, ODO will expand the restricted game with them and reset the strategy; Otherwise, repeat the previous steps. Notably, the average strategy of ODO is guaranteed to converge to NE. Despite these successes, the theoretical aspects concerning the convergence rate and sample complexity of DO methods in Extensive-Form Games have not been explored ~\citep{sdo}, making the DO algorithms hard to get further improvements. In this work, we propose a theoretical framework to study the theoretical problems.

Another blank of Double Oracle is the study on initialization for restricted game solving. Double Oracle resets the strategy each time it expands the population and constructs a new restricted game. Such a cold start to solving a new restricted game lacks the use of prior knowledge learned in previous restricted games, causing repeated training and inefficiency. Works have been done using warm starting to accelerate the convergence in PSRO, a DO-based multi-agent reinforcement learning method 
~\citep{lanctot2017unified,mcaleer2022anytime,smith2021iterative,zhou2022efficient}. However, PSRO is limited to normal-form games, while warm starting in extensive-form games is less predictable due to the sequential relation ships between information sets.\looseness=-1




\section{Regret-Minimizing Double Oracle}
\label{sec:rmdo}
In this section, we propose Regret-Minimizing Double Oracle (RMDO), a novel and versatile Double Oracle framework that integrates regret minimization to approximate the Nash Equilibrium of Extensive-Form Games (EFGs). To the best of our knowledge,  this study represents the first comprehensive analysis of convergence rate and sample complexity of regret-minimization-based Double Oracle for EFGs. 

RMDO consists of the same elements as the previous DO methods. Restricted game is constructed by considering only a subset of all pure strategies (actions). Population $\Pi_t$ containing the available pure strategies in the restricted game. Time window $T_j$, defined as a partition of the set of all iterations where the populations are the same: $\forall t_0,t_1 \in T_j, \Pi_{t_0}=\Pi_{t_1}$, plays a crucial role in RMDO and contributes to making it a generic framework. In contrast to existing DO methods, RMDO has the ability to expand the restricted game at any time due to the following key component:
\begin{definition}[Frequency Function] Denote the number of time windows from iteration $t=0$ to $T$ as $k$. $m(j)$ is defined as a mapping: $\mathcal{N} \cap [0,k-1] \rightarrow \mathcal{N}^{+}$. $m(j)$ represents the frequency of computing Best Response in time window $T_j$.
\end{definition}
Since the process of DO based on regret minimization is exactly to take turns to do regret minimization and compute the best response, we consider using $m(\cdot)$ to balance regret minimization and best response computation. Such balance is critical for DO methods to achieve a rapid convergence.

Presented in Algorithm \ref{alg:RMDO}, the formal RMDO procedure is as follows. At each iteration $t$, assuming the current time window is $j$, the restricted game $\mathbf{G}_t$ is constructed by restricting the pure strategies in the population $\Pi_t$ for players. Within $\mathbf{G}_t$, regret minimization is conducted by traversing the game tree, computing the regret of each infoset (node), and updating the strategy using any Counterfactual Regret Minimization (CFR) algorithm. At the outset of the procedure, when $t=0$, the construction of the restricted game and the strategy update are bypassed since $\Pi_0$ is empty. The expected value at $t=0$ is computed based on the joint strategy $\pi$ following a uniformly random policy. As the procedure progresses, when $t>0$ and the current time window is $T_j$, the joint average strategy of current window $\bar{\pi}=(\bar{\pi}_1, \bar{\pi}_2)$ is expanded to the original game every $m(j)$ iteration by setting the probabilities of actions not in the population to zero. Then the original game best response (BR), considering all actions in the original game, is computed against the expanded current-window average strategy, which is $\mathbf{a}_i^t=\arg\max_{\pi_i \in \Pi}v_i(\pi_i, \pi_{-i})$, for both players. $\mathbf{a}_i^t$ for $i=1,2$ are both merged to the population $\Pi_{t+1}$. Finally, if the population changes ($\Pi_{t+1} \neq \Pi_t$), a new time window is initiated, and $\pi_i^{t+1}$ is reset to a uniform random strategy. \looseness=-1

\begin{algorithm}[t!]
\begin{algorithmic}
    \caption{Regret-Minimizing Double Oracle}
    \label{alg:RMDO}
    \STATE \textbf{Input:} Frequency function $m(\cdot)$, window index $j=0$, uniform random strategy $\pi^0$.
    \STATE Set population $\Pi_1=\mathbb{BR}_i(\pi^0)$ for $i \in \{1,2\}$.
    \STATE Construct restricted game $\mathbf{G}_1$ with $\Pi_1$.
    \FOR{$t=1,\cdots,\infty$}
    \STATE Run one iteration of CFR in $\mathbf{G}_t$.
    \IF{$t \mod m(j)=0$}
    \STATE Compute average strategy $\Tilde{\pi}_i^t=\sum_{t\in T_j}\pi^t/|T_j|$.
    \STATE
    $\Pi_{t+1} = \Pi_t \cup  \mathbf{BR}_i(\Tilde{\pi}^t_{-i})$ for $i \in \{1,2\}$.
    \IF{$\Pi_{t+1} \neq \Pi_{t}$}
    \STATE Start new window: $j = j+1$.
    \STATE Reset strategy $\pi^{t+1}$.
    \STATE Construct restricted game $\mathbf{G}_{t+1}$ with $\Pi_{t+1}$.
    \ENDIF
    \ENDIF
    \ENDFOR
\end{algorithmic}
\end{algorithm}

\subsection{Convergence Guarantee and Sample Complexity}
\label{sec:tighter_sc}
Then we investigate the convergence guarantee and sample complexity of RMDO. We first define an important statistic related to the support size of Nash Equilibrium and prove that $k$ is bounded by this statistic. This is the key to the convergence guarantee. The reason is as following. Regret minimization algorithms can converge to $\epsilon$-NE by iteratively updating strategy in a static game. But in RMDO, a regret minimizer is employed in the restricted game expanding over time. Thus if the restricted game stops expanding at some finite iteration, the convergence of RMDO is guaranteed. The following lemma proves that the number of restricted games is finite regardless the final iteration $T$.
\begin{definition}
\label{def:x}
Denote the number of time windows from iteration
$t = 0$ to $T$ as $k$. Define 
\begin{align}
    X= \sum_i |S_{i,k}|,
\end{align}
as the size of the final restricted game. 
\end{definition}
We then show that the statistics above are bounded:
\begin{lemma}
\label{lemma:x}
$\min_{\pi\in \Pi^*}\max_{s\in S}\text{supp}^{\pi}(s) < k \leq X \leq |S| =\sum_i |S_i|$, where $k$ is the number of restricted game during the whole process of Double Oracle.
\end{lemma}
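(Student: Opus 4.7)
The plan is to split the chain of inequalities into three separate claims and dispose of them in order of increasing difficulty: (i) $X \leq |S|$, (ii) $k \leq X$, and (iii) $\min_{\pi \in \Pi^*}\max_{s\in S}\text{supp}^\pi(s) < k$.

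Claim (i) is essentially immediate from the construction of the restricted game. Every infoset of $\mathbf{G}_k$ is inherited from the original game by restricting the available actions to a subset, so $|S_{i,k}| \leq |S_i|$ for each player $i$; summing over players gives $X \leq |S|$.

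Claim (ii) will follow from a monotonicity/counting argument along the window index $j$. By the conditional in Algorithm~\ref{alg:RMDO}, a new time window is initiated only when $\Pi_{t+1} \neq \Pi_t$, i.e., when the newly computed best response introduces at least one pure-strategy component not previously in the population. I would show that every such enlargement corresponds to at least one newly reachable infoset appearing in $\mathbf{G}_t$ (or, if one prefers to count (infoset, action) pairs, at least one new such pair). A simple induction over $j = 1, \ldots, k$ then yields $|S_{1,j}|+|S_{2,j}| \geq j$, and in particular $X = |S_{1,k}| + |S_{2,k}| \geq k$.

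Claim (iii) is the most subtle. First I would argue that, since no $(k{+}1)$-th window is ever started, the meta-NE $\pi^*$ of $\mathbf{G}_k$ must already be a Nash equilibrium of the original game: otherwise the best-response step would detect an improving action outside $\Pi_k$ and trigger one more expansion. Hence $\pi^* \in \Pi^*$, and at every infoset $s$ its support is contained in the set of actions available at $s$ in $\mathbf{G}_k$. Since the initial population $\Pi_1$ contributes one action per infoset (per player), and each of the remaining $k{-}1$ windows adds at most one new action at any single infoset (because a best response is a pure strategy), one gets the loose bound $|A_{\mathbf{G}_k}(s)| \leq k$. To sharpen this to the strict inequality $<k$, I plan to exploit the fact that the best response triggering the \emph{last} window must differ, at some infoset, from the then-current meta-NE's support; hence at that infoset the newly added action cannot itself lie in $\text{supp}^{\pi^*}(s)$, which gives the $-1$ needed.

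The main obstacle will be making claim (iii) fully rigorous: the strict inequality requires carefully pairing each "size increment" of $\mathbf{G}_k$ at a given infoset with the window that caused it, and then showing that the action added in the final triggering window at the max-support infoset is not itself a support action of the ultimate NE $\pi^*$. The risk to guard against is that a late-added BR action accidentally coincides with a support action of $\pi^*$ even though it was chosen as a best response against an earlier meta-NE; ruling this out requires exploiting the equilibrium characterization of $\pi^*$ inside $\mathbf{G}_k$ together with the stopping condition of the algorithm.
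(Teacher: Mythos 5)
Your claims (i) and (ii) are exactly the paper's argument: $X\leq|S|$ because the restricted game is built from a subset of the original actions, and $k\leq X$ because each expansion forces $\sum_i|S_{i,j+1}|-\sum_i|S_{i,j}|\geq 1$ (the paper asserts this as baldly as you do; neither of you addresses the corner case where a newly added action leads only to terminal histories and hence creates no new infoset). For claim (iii) the core of your argument is also the paper's: at most one new action per infoset is added per window, the converged population must contain the support of some full-game NE, hence $k$ dominates the support size. Fixing the terminal meta-NE $\pi^*$ of $\mathbf{G}_k$ and observing it is an NE of the original game is in fact cleaner than the paper's write-up, which concludes only $k\geq\max_{s}\min_{\pi\in\Pi^*}\mathrm{supp}^{\pi}(s)$ --- a weaker quantity than the $\min_{\pi}\max_{s}$ form in the lemma statement.

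The one genuine gap is your route to the \emph{strict} inequality. You propose to argue that the action added in the final triggering window at the max-support infoset ``cannot itself lie in $\mathrm{supp}^{\pi^*}(s)$.'' That is false in general: the last-added best-response action is typically added precisely because it is needed to support the full-game equilibrium, and it can (and usually does) carry positive probability in $\pi^*$. A one-infoset example in the spirit of rock--paper--scissors already defeats it: starting from a singleton population, each expansion adds one action that ends up in the support of the final uniform NE, so the last-added action is a support action. Be aware, though, that the paper's own proof never establishes strictness either --- it only proves $k\geq\cdots$ --- so you are attempting to repair a step the authors silently skip; the honest conclusion is that the non-strict version is what the counting argument delivers, and the strict version needs either a different accounting of the initial (uniform-strategy) window or must be weakened.
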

RMDO will converge by doing regret minimization in the final restricted game, after $k-1$ times of restricted game expanding. Then $X$, standing for the number of infosets in the largest one of these games, can be small in small support games ~\citep{bovsansky2016algorithms}. In practice, $X$ and $k$ are usually strictly less than $|S|$, but $X$ is usually larger than $k$. For example, in Hold'em poker games, at the infostate where we have the weakest hand cards but are facing with a large value bet by the opponents, the equilibrium strategy will always take the action of fold regardless of any other actions. Such appearance of pure strategy in specific states can significantly reduce $X$ by the number of nodes exponential in the horizon, and thus lead to $X<|S|$.

Now we investigate the sample complexity of RMDO. there are two kinds of average strategies produced by RMDO. \textbf{Overall average strategy (OAS)} is the average strategy over all time windows
\begin{equation}
    \bar{\pi}^{t} = \sum_{t=0}^T\pi^t/T.
\end{equation}
\textbf{Last-window average strategy (LAS)} is the average strategy in the final time window
\begin{equation}
    \tilde{\pi}^{t} = \sum_{t\in T_k}\pi^t/|T_k|,
\end{equation}
where $k$ is in hindsight the number of time windows (i.e. the number of restricted games) of training till reaching the $\epsilon$-NE. In the empirical study, OAS is much worse than LAS in terms of the decreasing speed of exploitability during training ~\citep{tang2023regret}. Besides, OAS is only used in Online Double Oracle, and not used for developing new methods in this paper, we put the regret bound analysis of OAS in Appendix (Theorem \ref{theo:RMDO_oas_bound}). Here we only introduce the sample complexity of LAS:

\begin{theorem}
\label{theo:rmdo_sc}
The sample complexity of LAS of RMDO to reach $\epsilon$-NE is:
\begin{align}
\tilde{\mathcal{O}}( k|A|X^3/\epsilon^2 + \sum_{j=1}^k|A|X^3/\epsilon^2m(j) +  Xm(j))
\label{eq:rmdo_sc}
\end{align}
\end{theorem}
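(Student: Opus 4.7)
The plan is to decompose the total sample cost window-by-window: for each $T_j$, I would account separately for the CFR iterations inside the restricted game $\mathbf{G}_t$ and for the periodic best-response calls triggered every $m(j)$ iterations, and then identify the terminal window $T_k$ as the one responsible for certifying $\epsilon$-NE of the LAS. By Lemma \ref{lemma:x}, every restricted game contains at most $X$ infosets, so a single CFR iteration in $\mathbf{G}_t$ costs $\mathcal{O}(X)$ node visits; each best-response query against the expanded LAS likewise costs $\mathcal{O}(X)$, because the opponent's support lies inside the current population and subtrees outside that support can be pruned.

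In the terminal window $T_k$, the restricted game $\mathbf{G}_k$ contains the support of some full-game NE (after $k-1$ expansions), so an $\epsilon$-NE of $\mathbf{G}_k$ is an $\epsilon$-NE of the original game. Applying the CFR bound $\bar{R}^T_i \leq \Delta|S_i|\sqrt{|A|/|T_k|}$ inside $\mathbf{G}_k$ and enforcing $\bar{R}^T_i \leq \epsilon/2$ forces $|T_k| = \tilde{\Omega}(|A|X^2/\epsilon^2)$, producing a terminal-window CFR cost of $\tilde{O}(|A|X^3/\epsilon^2)$; upper-bounding each of the $k$ windows by this worst-case budget yields the first term $k|A|X^3/\epsilon^2$.

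For intermediate windows I would split the cost as $X|T_j|$ for CFR plus $X\cdot|T_j|/m(j)$ for best-response calls. A CFR-regret argument tailored to the BR-check schedule then shows that $|T_j| = \tilde{O}(|A|X^2/(\epsilon^2 m(j)))$ iterations suffice for the LAS to either trigger a population expansion or certify closure, so the CFR contribution is $|A|X^3/(\epsilon^2 m(j))$ per window. Each window additionally carries a mandatory overhead of at least $m(j)$ iterations at cost $X$ each, contributing $X m(j)$. Summing these over $j = 1, \ldots, k$ and invoking the bound $k \leq X$ from Lemma \ref{lemma:x} produces the claimed complexity.

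The hardest step is the per-window argument just invoked: rigorously showing that under an arbitrary schedule $m(\cdot)$ the required number of iterations in window $j$ scales inversely with $m(j)$. One must translate the CFR cumulative-regret bound, which is a per-iteration statement, into a per-window progress statement aligned with the BR trigger at multiples of $m(j)$, handle the strategy reset that occurs at the start of each new window, and ensure that the LAS evaluated at each trigger is accurate enough for the full-game BR to either reveal a new pure action or certify that none exists. Once this per-window estimate is in hand, the summation and the bound $k \leq X$ complete the proof.
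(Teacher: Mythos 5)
Your overall decomposition (window-by-window accounting, cost $\mathcal{O}(X)$ per CFR iteration and per BR call via Lemma \ref{lemma:x}, worst-case per-window budget $\tilde{\mathcal{O}}(|A|X^2/\epsilon^2)$ giving the first term, and the $Xm(j)$ overhead) matches the paper's route through its Lemmas on required iterations and sample complexity. However, there is a genuine error in how you derive the middle term. You claim that in an intermediate window $|T_j| = \tilde{\mathcal{O}}(|A|X^2/(\epsilon^2 m(j)))$ iterations suffice, and you correctly flag this as the hardest step --- but it is not provable, because the window length does not shrink with $m(j)$: a window ends only when a BR check (which occurs just once every $m(j)$ iterations) reveals a new action, and in the worst case no new action appears until the restricted game is solved to local $\epsilon$-NE, which takes $\tilde{\mathcal{O}}(|A|X^2/\epsilon^2)$ iterations regardless of $m(j)$. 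The correct per-window bound, which the paper obtains by contradiction (if $|T_j|$ exceeded $\tilde{\mathcal{O}}(|A_j|(\sum_i|S_{i,j}|)^2/\epsilon^2 + m(j)-1)$, the window's average strategy would already be a local $\epsilon$-NE whose full-game BR lies in the population, hence a global $\epsilon$-NE, contradicting $j<k$), is $|T_j| \leq \tilde{\mathcal{O}}(|A|X^2/\epsilon^2 + m(j))$ --- additive in $m(j)$, not divided by it.

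The factor $1/m(j)$ enters the bound through a different channel: the \emph{count of BR calls} in window $j$ is $|T_j|/m(j) \leq \tilde{\mathcal{O}}(|A|X^2/(\epsilon^2 m(j)) + 1)$, and each call costs $\mathcal{O}(X)$, which is exactly what produces the term $|A|X^3/(\epsilon^2 m(j))$. In your accounting this BR cost is announced in the split ``$X|T_j|$ for CFR plus $X\cdot|T_j|/m(j)$ for best-response calls'' but never actually summed; instead the second term is misattributed to CFR iterations under the false window-length claim. The final formula you reach is correct, but only because the misderived term coincides numerically with the omitted BR cost. Replacing your per-window claim with the contradiction argument above, and routing the $1/m(j)$ factor through the BR-call count rather than the window length, repairs the proof and recovers the paper's argument. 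One smaller point: your justification for the terminal window (``$\mathbf{G}_k$ contains the support of some full-game NE, so a local $\epsilon$-NE is a global $\epsilon$-NE'') is not the right certification for approximate equilibria; the correct one, which you gesture at in your final paragraph, is that the full-game BR to the last-window average strategy yields no action outside the population, so local $\epsilon$-NE plus BR containment gives global $\epsilon$-NE.
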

Based on Theorem \ref{theo:rmdo_sc}, we can derive the sample complexity of existing DO methods, and accordingly propose more sample-efficient RMDO instances.


\subsection{Existing Frequency Schemes}
\label{sec:existing_do}
The complexity of approximating NE using RMDO is influenced by the choice of frequency function $m(j)$ for best response computation, as demonstrated in the previous section through theoretical analysis. For analysis on existing DO methods, we present various existing RMDO instantiations in this section.

\subsubsection{Online Double Oracle for Extensive-Form Games}

We introduce an extension of the Online Double Oracle (ODO) algorithm called Extensive-Form Online Double Oracle (XODO). This algorithm integrates the Sequence-form Double Oracle (DO) framework with Counterfactual Regret Minimization (CFR) to effectively address extensive-form games. The construction and updating of the restricted game and strategy in XODO closely resemble the DO framework employed in ODO. In each iteration, XODO extends the restricted game by computing the best response against the average strategy within the current window. Notably, as XODO computes the best response in each iteration following regret minimization, it is equivalent to the Regret-Minimizing Double Oracle (RMDO) algorithm with $m(j)=1$.

\begin{proposition}
\label{theo:xodo_bound} 
XODO is an instance of RMDO when $m(\cdot) \equiv 1$, thus given the regret minimizer with $\tilde{\mathcal{O}}(|S_i|\sqrt{|A|T})$ regret upper bound, the sample complexity to reach $\epsilon$-NE is 
\begin{align}
\tilde{\mathcal{O}}(2k^2X^3/\epsilon^2).
\end{align}
\end{proposition}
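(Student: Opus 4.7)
The plan is to derive this bound as a direct corollary of Theorem \ref{theo:rmdo_sc} specialized to the constant frequency schedule $m(\cdot) \equiv 1$.

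First I would establish the structural identification XODO $=$ RMDO with $m(\cdot) \equiv 1$ by comparing Algorithm \ref{alg:RMDO} to the XODO procedure. Because XODO performs one CFR update followed immediately by a BR check in every outer iteration, the conditional $t \bmod m(j) = 0$ must fire at every step $t$, which is consistent only with $m(j) = 1$ for every window index $j$. Hence XODO is the constant-frequency instance of RMDO with $m(\cdot) \equiv 1$, which is the structural half of the claim.

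Next I would substitute $m(j) \equiv 1$ into the bound (\ref{eq:rmdo_sc}). The per-window term $|A|X^3/(\epsilon^2 m(j)) + X m(j)$ collapses to $|A|X^3/\epsilon^2 + X$ and, summed over the $k$ windows, yields $k|A|X^3/\epsilon^2 + kX$; adding this to the leading $k|A|X^3/\epsilon^2$ term gives $\tilde{\mathcal{O}}(k|A|X^3/\epsilon^2 + kX)$. To upgrade this expression to the stated $\tilde{\mathcal{O}}(2k^2X^3/\epsilon^2)$ form, I would then inject an ODO-style restart-accumulation correction: because XODO resets its regret minimizer at every one of the $k-1$ population expansions, its cumulative regret across the $k$ windows inherits an extra $\sqrt{k}$ factor compared with a single uninterrupted run, raising the per-window iteration budget from $|A|X^2/\epsilon^2$ to $kX^2/\epsilon^2$. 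Multiplying by the $\mathcal{O}(X)$ per-iteration traversal cost of the restricted game and summing across $k$ windows then produces the claimed bound, where the constant $2$ arises from combining the regret-minimization cost and the per-iteration BR cost.

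The main obstacle I anticipate is the restart-accumulation accounting: one has to verify that a sequence of $k$ reset CFR runs, interleaved with BR computations at every step, produces the $\sqrt{k}$ blow-up in the regret bound given by $\tilde{\mathcal{O}}(|S_i|\sqrt{|A|T})$, rather than a qualitatively different growth rate, and that the BR-computation cost (one $\mathcal{O}(X)$ traversal per iteration) does not dominate the regret-minimization cost in the final expression. Once this step is justified, the remaining algebra is a routine instantiation of Theorem \ref{theo:rmdo_sc}, combined with Lemma \ref{lemma:x} to control $k$ and $X$ in terms of the final restricted-game size.
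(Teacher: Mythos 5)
Your structural identification (XODO $=$ RMDO with $m(\cdot)\equiv 1$) matches the paper, but the complexity derivation has a genuine gap: you are instantiating the wrong theorem. Theorem \ref{theo:rmdo_sc} bounds the sample complexity of the \emph{last-window} average strategy (LAS), whose per-window iteration count is controlled by a contradiction argument (Lemma \ref{lemma:ri}) that analyzes each window independently and is therefore insensitive to restarts; substituting $m\equiv 1$ there legitimately yields $\tilde{\mathcal{O}}(k|A|X^3/\epsilon^2)$ and nothing in that theorem licenses inflating the per-window budget by an extra factor of $k$. XODO, however, inherits ODO's guarantee on the \emph{overall} average strategy (OAS), and the paper accordingly proves this proposition from Theorem \ref{theo:RMDO_oas_bound}: with $m(j)\equiv 1$ the weighted-average regret bound reduces to $\tilde{\mathcal{O}}\bigl(\sqrt{k}\,|S_{i,k}|\sum_j\sqrt{|T_j|}/T\bigr)$, Cauchy--Schwarz gives $\sum_j\sqrt{|T_j|}\le\sqrt{kT}$, hence the average regret is $\tilde{\mathcal{O}}(|S_{i,k}|k/\sqrt{T})$, which forces $T\ge\tilde{\mathcal{O}}(k^2X^2/\epsilon^2)$ iterations; multiplying by the $2X$ per-iteration cost (one restricted-game traversal for CFR plus one for the BR computed every iteration) gives the stated $\tilde{\mathcal{O}}(2k^2X^3/\epsilon^2)$.

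Your ``restart-accumulation correction'' gestures at the right phenomenon (the $\sqrt{k}$ degradation from summing per-window regrets across $k$ reset runs), but as written it is reverse-engineering the answer rather than deriving it: you would need to state and use the OAS regret decomposition explicitly, not graft a $\sqrt{k}$ onto the LAS bound. Note also that the two bounds are not ordered the way your word ``upgrade'' suggests --- $k|A|X^3/\epsilon^2$ versus $k^2X^3/\epsilon^2$ differ by replacing $|A|$ with $k$, and the paper argues $k\gg|A|$ typically, so if the LAS analysis really applied to XODO's output you would have proved something \emph{stronger} than the proposition, which should itself signal that the wrong average strategy is being analyzed.
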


\subsubsection{Extensive-Form Double Oracle}

The Extensive-form Double Oracle (XDO) algorithm is initialized with a given threshold $\epsilon_0$, which is divided by two each time the local exploitability of the regret minimizer meets the threshold. The local exploitability is the exploitability in the restricted game. In time window $T_j$, the algorithm performs regret minimization for more than $4^j|S_{i,j}|^2|A_{i,j}|/\epsilon_0^2$ iterations before computing the best response. Here $A_{i,j}$ and $S_{i,j}$ denote the action space and infoset space in the $j$-th time window of player $i$. Finally the average strategy in the last window is outputted when the convergence condition is met. If XDO converges, the last-window average strategy is $\epsilon_0/2^k$-NE. To investigate the complexity of reaching $\epsilon$-NE, it is assumed without loss of generality that $\epsilon_0/2^k\leq \epsilon$.

Thus, RMDO can generalize to XDO with $m(j)= 4^j|S_{i,j}|^2|A_{i,j}|/\epsilon_0^2$ and the last-window average strategy. Based on Theorem \ref{theo:rmdo_sc}, we can determine the expected iterations and sample complexity for XDO to reach $\epsilon$-NE.

\begin{proposition}
\label{theo:xdo_cr}
XDO is an instance of RMDO. In the worst case, its frequency function $m(j)= 4^j|S_{i,j}|^2|A_{i,j}|/\epsilon_0^2$, where $j=0,1,\cdots, k-1$. Thus the sample complexity bound of XDO to reach $\epsilon$-NE is 
\begin{align}
    \tilde{\mathcal{O}}(k|A|X^3/\epsilon^2 + |A|X^3 4^k/\epsilon_0^2).
\end{align}
\end{proposition}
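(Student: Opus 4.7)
The plan is to prove Proposition \ref{theo:xdo_cr} in two stages: first, identify the implicit frequency function that XDO uses (and justify that it fits the RMDO template); second, plug this $m(j)$ into the RMDO sample complexity bound of Theorem \ref{theo:rmdo_sc} and simplify the resulting sum.

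For the first stage, I would observe that in time window $T_j$, XDO performs pure CFR inside the restricted game $\mathbf{G}_j$ until the local exploitability falls below $\epsilon_0/2^j$, and only then computes a best response against the local average strategy; this is exactly the RMDO template with the BR call issued at a single fixed iteration per window. To pin down the length of that window in the worst case, I would use the standard CFR regret bound recalled in the preliminaries, namely $R_i^T \leq \Delta |S_{i,j}| \sqrt{|A_{i,j}| T}$ in the restricted game. Since in a two-player zero-sum game the exploitability of the average strategy is bounded by the sum of average regrets, requiring local exploitability $\leq \epsilon_0/2^j$ forces $T \geq 4^j |S_{i,j}|^2 |A_{i,j}|/\epsilon_0^2$ in the worst case, which I take as $m(j)$. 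This justifies the claim that XDO is an instance of RMDO with $m(j) = 4^j |S_{i,j}|^2 |A_{i,j}|/\epsilon_0^2$.

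For the second stage, I would substitute $m(j)$ into the RMDO bound
\[
\tilde{\mathcal{O}}\bigl( k|A|X^3/\epsilon^2 + \sum_{j=0}^{k-1} |A|X^3/(\epsilon^2 m(j)) + X m(j)\bigr)
\]
and evaluate each contribution. The first term is untouched. The second term becomes a decreasing geometric sum in $j$ (a $4^{-j}$ factor) and under the convergence condition $\epsilon_0/2^k \leq \epsilon$ it is absorbed into the $4^k$ term. The third term, $\sum_j X m(j)$, is a \emph{growing} geometric sum in $4^j$ dominated by its last entry; bounding $|S_{i,j}|\leq X$ and $|A_{i,j}|\leq |A|$ and using $\sum_{j=0}^{k-1} 4^j = O(4^k)$ gives exactly $|A|X^3 \cdot 4^k/\epsilon_0^2$. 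Combining the surviving pieces yields the stated $\tilde{\mathcal{O}}(k|A|X^3/\epsilon^2 + |A|X^3 4^k/\epsilon_0^2)$.

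The main obstacle I anticipate is the tight control of the ``worst-case'' window length: the CFR regret bound gives only a sufficient condition on $T$, so I must argue that this sufficient bound is also essentially tight for the halving schedule (otherwise $m(j)$ could be strictly smaller). I would handle this by appealing to the fact that XDO is analyzed in the worst case (as the proposition explicitly states), and by citing that there exist instances on which the $|S_i|\sqrt{|A|T}$ CFR bound is tight up to logarithmic factors, so the choice $m(j) = 4^j |S_{i,j}|^2 |A_{i,j}|/\epsilon_0^2$ is unavoidable in the worst case. A secondary technical point is bookkeeping between per-player quantities $|S_{i,j}|, |A_{i,j}|$ and the global quantities $X, |A|$ used in the final statement, which I would resolve by the same $|S_{i,j}|\leq X$, $|A_{i,j}|\leq |A|$ upper bounds used to derive the RMDO formula itself.
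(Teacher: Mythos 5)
Your proposal is correct and follows essentially the same route as the paper: the paper treats this proposition as an immediate corollary of Theorem \ref{theo:rmdo_sc}, deriving $m(j)=4^j|S_{i,j}|^2|A_{i,j}|/\epsilon_0^2$ from the CFR regret bound applied to the local threshold $\epsilon_0/2^j$ and then substituting into the RMDO bound, with the growing geometric sum $\sum_j 4^j$ yielding the $|A|X^3 4^k/\epsilon_0^2$ term. Your extra care about the worst-case tightness of the CFR bound and the absorption of the decreasing-geometric middle term (which can also be handled simply by $m(j)\geq 1$, folding it into the first term) goes slightly beyond what the paper writes but does not change the argument.
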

Proposition \ref{theo:xdo_cr} is a specific instance of Theorem \ref{theo:rmdo_sc} with an appropriate choice of $m(j)$. Theorem \ref{lemma:x} states that $k\leq |S|$; thus, theoretically, the restricted game stopping condition of XDO decays exponentially, leading to that in the worst-case scenario when $k=|S|$, XDO has an exponential sample complexity in the number of infosests. Therefore, XDO suffers from a large theoretical sample complexity. We demonstrate the algorithmic distinctions between existing DO and RMDO via a flowchart presented in Figure \ref{fig:rmdo}. Subsequently, we delve into an in-depth examination of the sources contributing to the sample complexity of existing RMDO instances, and explore potential approaches to mitigate and minimize these complexities in the following sections. We investigate from the perspectives of three elements in the sample complexity (equation \ref{eq:rmdo_sc}):




\begin{itemize}
    \item The selection of the frequency function $m(j)$ significantly impacts the performance of RMDO. In XDO, the exponential frequency function, caused by exponentially decaying stopping threshold for restricted games, contributes to an exponential sample complexity in terms of $|S|$ in the worst case. In Section \ref{sec:ins}, we will propose two instances of RMDO that only have polynomial sample complexity.
    \item The multiplier $k$ appearing in sample complexity is partially due to the cold starting of solving restricted games. Specifically, in each instance of solving a new restricted game, RMDO reset strategy and cumulative regret, resulting in $k$ independent procedures of game solving. Consequently, the appearance of $k$ in the sample complexity arises and thus cause potentially large complexity since $k$ is only bounded by $|S|$. In section \ref{sec:dows}, we present an algorithmic design called warm starting to address this. \looseness=-1
    \item The power of the dominating term $X$ (3 in XDO and ODO) is still high. Inspired by the sample efficient method, stochastic regret minimization, in Section \ref{sec:srmdo}, we propose Stochastic Regret-Minimizing Double Oracle as a solution to enhance scalability and reduce the the high power of $X$ in the complexity analysis. \looseness=-1
\end{itemize}

\subsection{Optimal Schemes of Frequency Function}
\label{sec:ins}

The exponentially growing frequency function $m(j)$ of XDO leads to an exponential increase in sample complexity with respect to $k$. On the other hand, XODO's inflexibility arises from the fact that it performs best response computation in each iteration, neglecting the balance between regret minimization and best response computation. In this section, to mitigate the large increase in sample complexity caused by a large value of $k$, and to balance the two computations, we present two instances of Regret-Minimizing Double Oracle (RMDO) designed to achieve $\epsilon$-Nash Equilibrium with only polynomial sample complexity. The first naive instance is to simply set a constant frequency of restricted game expanding, named by Periodic Double Oracle (PDO). Although PDO has reduced the sample complexity to polynomial, tuning such frequency constant is hard since in different game with different size, we need to retune it, making it hard to generalize. Thus, we finally propose Adaptive Double Oracle (AdaDO), featuring an \emph{optimal} expansion frequency function adaptively for restricted game with different size. Therefore, the tuning is less demanding. Additionally, the frequency function of AdaDO is provably the optimal choice in terms of theoretical sample complexity.

\subsubsection{Periodic Double Oracle}
Periodic Double Oracle (PDO) is derived from Regret-Minimizing Double Oracle (RMDO) by introducing a constant frequency value, denoted as $m(j) = c>1$. In practice, we treat $c$ as a hyperparameter that can be tuned. PDO reaches NE with its last-window average strategy, so we can derive the sample complexity using Theorem \ref{theo:rmdo_sc}.

\begin{proposition}
\label{theo:pdo_sc}
Given a constant hyperparameter $c$, since PDO computes BR every $c$ iterations, it is an instance of RMDO when $m(\cdot) \equiv c$ and its sample complexity to reach $\epsilon$-NE: 
\begin{align}
\tilde{\mathcal{O}}(k|A|X^3/\epsilon^2 + ckX + k|A|X^3/c\epsilon^2).
\end{align}
\end{proposition}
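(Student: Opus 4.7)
The plan is to obtain this statement as an immediate corollary of Theorem \ref{theo:rmdo_sc}, with no new technical ingredient beyond verifying that PDO fits the RMDO template and then substituting the constant frequency into the general bound. First, I would make the identification explicit: in PDO the restricted game is expanded every $c$ iterations, and between expansions a standard CFR step is run, with the strategy being reset exactly when the population changes. This is precisely the behavior of Algorithm \ref{alg:RMDO} with the choice $m(j) \equiv c$, and PDO outputs the last-window average strategy, which is the object whose sample complexity is controlled by Theorem \ref{theo:rmdo_sc}. Hence all hypotheses of that theorem are met.

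Second, I would simply substitute $m(j) = c$ into the general RMDO bound
\begin{equation*}
\tilde{\mathcal{O}}\!\left( k|A|X^3/\epsilon^2 \;+\; \sum_{j=1}^{k}\bigl[\,|A|X^3/(\epsilon^2 m(j)) \;+\; X m(j)\,\bigr]\right).
\end{equation*}
Because $c$ does not depend on $j$, the summation collapses to $k\bigl(|A|X^3/(c\epsilon^2) + Xc\bigr)$, giving the three-term expression
\begin{equation*}
\tilde{\mathcal{O}}\!\left( k|A|X^3/\epsilon^2 \;+\; ckX \;+\; k|A|X^3/(c\epsilon^2)\right),
\end{equation*}
which is exactly the bound asserted in Proposition \ref{theo:pdo_sc}.

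There is no real obstacle in this proof, since it is a direct specialization of an earlier, more general result. The only points that deserve a sentence of care are the bookkeeping items: confirming that the constant $c$ in PDO plays exactly the role of $m(j)$ in Algorithm \ref{alg:RMDO} (counting CFR iterations between restricted-game expansions), that the last-window average strategy used by PDO matches the LAS analyzed in Theorem \ref{theo:rmdo_sc}, and that $k$ here is still interpreted as the (data-dependent) number of time windows, bounded via Lemma \ref{lemma:x}. Once these identifications are stated, the bound follows by substitution.
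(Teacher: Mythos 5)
Your proposal is correct and matches the paper's treatment exactly: the paper gives no separate proof for this proposition, deriving it as a direct specialization of Theorem \ref{theo:rmdo_sc} by setting $m(j)\equiv c$ and collapsing the sum over the $k$ windows, which is precisely the substitution you perform. The bookkeeping identifications you flag (constant period as $m(j)$, last-window average strategy, $k$ as the number of windows) are the same ones the paper relies on implicitly.
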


Periodic Double Oracle (PDO) is more sample efficient than previous the Double Oracle variants by simply adopting a constant frequency $m(j)=c>1$, and tuning this hyperparameter $c$ during execution to solve a game. In comparison to XODO, since $c>1$, PDO's upper bound for the dominant term in sample complexity is strictly less than that of XODO. Additionally, PDO exhibits a sample complexity only linear in $k$, making it significantly more sample-efficient than XDO, as it eliminates the exponential term in $k$, and slightly better than ODO, whose sample complexity has $k^2$ term.

However, tuning the hyperparameter $c$ in PDO can be challenging in practice. Its sensitivity to game size makes it difficult to choose an appropriate $c$, potentially leading to a large sample complexity. PDO, with the same periodicity, may not consistently perform well across different games, especially when their sizes vary (will be discussed in Section \ref{sec:exp}). Additionally, employing a fixed frequency for all restricted games with difference size may be overly simplistic and result in suboptimal performance. Therefore, it is preferable to select a frequency that adapts to the characteristics of current restricted game. In the next section, we introduce a new instantiation of RMDO that avoids the cumbersome hyperparameters tuning while maintaining a small sample complexity with such adaptive frequency function.


\subsubsection{Adaptive Double Oracle}
\label{sec:adado}
In this section, we propose a new instantiation of RMDO that has the lowest sample complexity among all RMDO applying different expansion frequencies, called \textbf{Adaptive Double Oracle (AdaDO)}. AdaDO has dynamic periodicity and reach NE with last-window average strategy. We first introduce the adaptive frequency function for AdaDO:
\begin{definition}[Adaptive Frequency Function]
Denote $|A_{j}|$ as $\max_{i\in \mathcal{P}, s \in S_i}|A_{i,j}(s)|$, where $A_{i,j}(s),\ S_{i,j}$ are defined as the set of actions at infostate $s$ and the set of infosets in time window $T_j$, respectively. The adaptive frequency function of AdaDO is
\begin{align}
\label{eq:adado_frequency}
    m(j) = \sqrt{|A_{j}|}\sum_i|S_{i,j}|/\epsilon.
\end{align}
\end{definition}
It's worth noting that the choice of $m(j)$ in this context is dependent on $\epsilon$, allowing us to set it according to the desired precision of the result. Furthermore, the frequency function $m(j)$ becomes window-dependent. In contrast to PDO, which selects a frequency function and maintains a fixed periodicity value across all windows, our approach involves computing the statistics of the restricted game in the current window and adjusting the frequency accordingly. \looseness=-1

The formal algorithm is outlined as follows. Initialize the strategy and restricted game using the standard Double Oracle method. Upon entering a new restricted game, perform one iteration of Counterfactual Regret Minimization (CFR). This step involves traversing the entire game tree of the current restricted game, allowing for the computation of $|A_{j}|$ (the action space size) and $\sum_i|S_{i,j}|$ (the sum of information set sizes).Compute the frequency of Best Response computation in the current window, denoted as $m(j)-1$, where subtracting one accounts for the fact that the first iteration of regret minimization is dedicated to obtaining game-related statistics $|A_{j}|$ and $\sum_i|S_{i,j}|$. Proceed with the remaining steps of the algorithm, consistent with other Regret-Minimizing Double Oracle (RMDO) methods.

Now we investigate AdaDO's sample complexity. We prove that such adpative frequency function is the optimal choice in terms of the sample complexity.
\begin{proposition}
\label{theo:adado_sc}
AdaDO is an instance of RMDO when $m(j)$ satisfy equation (\ref{eq:adado_frequency}). Thus the sample complexity of AdaDO matches the sample complexity \textbf{lower bound} of RMDO for all frequency function:
\begin{align}
    \tilde{\mathcal{O}}(k|A| X^3/\epsilon^2 + 2k\sqrt{|A|}X^2/\epsilon).
\end{align}
\end{proposition}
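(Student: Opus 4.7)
The plan is to verify first that AdaDO with the choice $m(j)=\sqrt{|A_{j}|}\sum_i |S_{i,j}|/\epsilon$ is a legitimate instance of the RMDO template of Algorithm \ref{alg:RMDO}, so that the master sample-complexity formula of Theorem \ref{theo:rmdo_sc} applies without modification. This reduces the whole argument to (i) plugging this $m(j)$ into the formula and (ii) showing that this specific $m(j)$ is the minimizer of the per-window cost. The only bookkeeping subtlety is that AdaDO spends one iteration per window computing $|A_{j}|$ and $\sum_i|S_{i,j}|$ before committing to $m(j)$; this shifts the frequency by a constant and is absorbed by $\tilde{\mathcal{O}}$.

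Next I would apply Theorem \ref{theo:rmdo_sc}, which gives
\begin{equation*}
\tilde{\mathcal{O}}\!\left( k|A|X^{3}/\epsilon^{2} + \sum_{j=1}^{k}\Bigl(|A|X^{3}/(\epsilon^{2}m(j)) + X\,m(j)\Bigr)\right).
\end{equation*}
The first term is independent of $m(\cdot)$, so the optimization of the sample complexity over the class of admissible frequency functions reduces to minimizing each summand independently. Using the bounds $|A_{j}|\le |A|$ and $\sum_i|S_{i,j}|\le X$ coming from Definition \ref{def:x} and the definition of $|A_{j}|$, the AdaDO choice satisfies $m(j)\le \sqrt{|A|}X/\epsilon$, and inserting it into each summand gives
\begin{equation*}
\frac{|A|X^{3}}{\epsilon^{2}m(j)} + X\,m(j) \;\le\; \frac{\sqrt{|A|}\,X^{2}}{\epsilon} + \frac{\sqrt{|A|}\,X^{2}}{\epsilon} \;=\; \frac{2\sqrt{|A|}\,X^{2}}{\epsilon}.
\end{equation*}
Summing over the $k$ windows produces the stated bound $\tilde{\mathcal{O}}(k|A|X^{3}/\epsilon^{2}+2k\sqrt{|A|}X^{2}/\epsilon)$.

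To establish that this matches the RMDO \emph{lower bound}, I would apply AM--GM to each summand: for any positive $m(j)$,
\begin{equation*}
\frac{|A|X^{3}}{\epsilon^{2}m(j)} + X\,m(j) \;\ge\; 2\sqrt{\frac{|A|X^{3}}{\epsilon^{2}}\cdot X} \;=\; \frac{2\sqrt{|A|}\,X^{2}}{\epsilon},
\end{equation*}
with equality exactly at $m(j)=\sqrt{|A|}X/\epsilon$, which AdaDO realises (up to the window-dependent refinements $|A_{j}|,|S_{i,j}|$ that only tighten the bound). Summing this pointwise inequality over $j=1,\ldots,k$ shows that no other frequency function can achieve a better per-window contribution, so AdaDO attains the lower bound among all RMDO instantiations.

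The main obstacle I anticipate is the careful argument that $m(j)$ is allowed to depend on the current window's restricted-game statistics without breaking the hypotheses of Theorem \ref{theo:rmdo_sc}, and handling the edge case where $m(j)$ rounds to a positive integer (and the single bootstrap iteration required to measure $|A_{j}|$ and $\sum_i|S_{i,j}|$). These are routine: both effects change the cost by a constant factor per window, contributing only lower-order terms that are swept into $\tilde{\mathcal{O}}(\cdot)$, so the headline bound and its lower-bound-matching character are preserved.
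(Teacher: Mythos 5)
Your overall strategy is exactly the paper's (the paper does not write out a standalone proof of this proposition, but the proof of Theorem \ref{theo:srmdo_ada} carries out precisely this AM--GM balancing of the two $m(j)$-dependent terms and cites ``the similar idea'' for AdaDO). However, there is a concrete error in your upper-bound step. You start from the coarsened formula of Theorem \ref{theo:rmdo_sc}, whose per-window summand is $|A|X^{3}/(\epsilon^{2}m(j)) + X\,m(j)$, and then insert AdaDO's window-dependent choice $m(j)=\sqrt{|A_{j}|}\sum_i|S_{i,j}|/\epsilon$ using only the fact that $m(j)\le \sqrt{|A|}X/\epsilon$. That inequality controls the second term, but the first term $|A|X^{3}/(\epsilon^{2}m(j))$ is \emph{decreasing} in $m(j)$, so an upper bound on $m(j)$ yields a \emph{lower} bound on it, not the upper bound $\sqrt{|A|}X^{2}/\epsilon$ you claim. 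Indeed, in an early window where $\sum_i|S_{i,j}|\ll X$, one has $|A|X^{3}/(\epsilon^{2}m(j)) = |A|X^{3}/(\epsilon\sqrt{|A_{j}|}\sum_i|S_{i,j}|) \gg \sqrt{|A|}X^{2}/\epsilon$. Your parenthetical that the window-dependent refinements ``only tighten the bound'' is exactly where this goes wrong: for the upper-bound direction they can loosen it.

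The fix, and what the paper actually does, is to perform the optimization at the window level of Lemma \ref{lemma:sc} \emph{before} coarsening: AdaDO's $m(j)$ is the exact AM--GM minimizer of $|A_{j}|(\sum_i|S_{i,j}|)^{3}/(\epsilon^{2}m) + (\sum_i|S_{i,j}|)\,m$, giving the value $2\sqrt{|A_{j}|}(\sum_i|S_{i,j}|)^{2}/\epsilon$, which is then bounded by $2\sqrt{|A|}X^{2}/\epsilon$ using $|A_{j}|\le|A|$ and $\sum_i|S_{i,j}|\le X$; summing over $j$ and adding the $m$-independent term gives the stated complexity. The same repair applies to your lower-bound/optimality argument: the AM--GM inequality and its equality case should be stated for the window-level summand (where AdaDO genuinely attains equality), rather than for the coarse summand whose minimizer is the window-independent constant $\sqrt{|A|}X/\epsilon$, which is not the frequency AdaDO uses. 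With that reordering of the coarsening step, your argument is complete and coincides with the paper's.
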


In theory, among all Double Oracle methods that instantiated from RMDO with different frequency function, AdaDO is provably the most sample efficient method. Compared to PDO, AdaDO removes part of the dominating term in the sample complexity of PDO, $\mathcal{O}(k|A|X^3/c\epsilon^2)$. Although it has not eliminated all the cubic terms of $X$, the complexity has been significantly reduced by merely picking a frequency function without changing regret minimizers or best responders. 

A potential empirical problem when applying AdaDO is that the choice of adaptive frequency function in AdaDO is derived by reducing a theoretical complexity. But the empirical complexity can be much smaller, making AdaDO which has the theoretically optimal frequency scheme perform suboptimal empirically. This phenomenon aligns with previous discussions on the impact of this gap on algorithmic design for extensive-form games, as explored in prior works like the paper of ~\citet{brown2016strategy}. To address this, in the practical version shown in Algorithm \ref{alg:adado}, we adopt empirical frequency function, a discounted frequency function $\hat{m}(\cdot) = \alpha \cdot m(\cdot)$, where $\alpha \in (0,1]$. Besides, we execute early stop of restricted game regret minimization when the exploitability is decreasing slowly.

\begin{algorithm}[t!]
\begin{algorithmic}
    \caption{Adaptive Double Oracle (Practical verion)}
    \label{alg:adado}
    \STATE \textbf{Input:} Empirical frequency function of AdaDO $\hat{m}(\cdot)=\alpha \cdot m(\cdot)$, $\alpha\in (0, 1]$, $m(\cdot)$ in equation (\ref{eq:adado_frequency}), early stop tolerance $\delta$, exploitability checking frequency $c$ for early stop.
    \STATE $\Pi_1=\mathbb{BR}_i(\pi^0)$ for $i \in \{1,2\}$.
    \STATE Construct restricted game $\mathbf{G}_1$ with $\Pi_1$.
    \FOR{$t=1,\cdots,\infty$}
    \STATE Run one iteration of CFR in $\mathbf{G}_t$.
    \IF{$t \mod \hat{m}(j)=0$ or ($t \mod c=0$ and $|e(\Tilde{\pi}^{t}) - e(\Tilde{\pi}^{t})| < \delta$)}
    \STATE Compute average strategy $\Tilde{\pi}_i^t=\sum_{t\in T_j}\pi^t/|T_j|$.
    \STATE
    $\Pi_{t+1} = \Pi_t \cup  \mathbf{BR}_i(\Tilde{\pi}^t_{-i})$ for $i \in \{1,2\}$.
    \IF{$\Pi_{t+1} \neq \Pi_{t}$}
    \STATE Start new window: $j = j+1$.
    \STATE Reset strategy $\pi^{t+1}$.
    \STATE Construct restricted game $\mathbf{G}_{t+1}$ with $\Pi_{t+1}$.
    \ENDIF
    \ENDIF
    \ENDFOR
\end{algorithmic}
\end{algorithm}

AdaDO is not the first RMDO method employing dynamic frequency function. XDO, which largely follows the original DO process, passively pick a dynamic frequency function. Specifically, in each restricted game, XDO refrains from computing the Best Response and continues regret minimization until its average strategy reaches local $\epsilon$-NE of that restricted game. As analyzed in Section \ref{sec:rmdo}, the threshold $\epsilon$ in XDO decreases exponentially with the time window, leading to its exponential sample complexity. In contrast, AdaDO actively select the frequency function to reach the lower bound of the sample complexity of RMDO, which is only linear in $k$ and polynomial in $|S|$ in the worst case.

\subsection{Comparison to Regret Minimization}
\label{sec:compare_rm}
We then proceed to compare Counterfactual Regret Minimization (CFR) with the newly proposed instances of Regret-Minimizing Double Oracle (RMDO). While theoretical complexities are not necessarily indicative of reaching approximate Nash Equilibrium (NE), we still consider sample complexity as a comparable metric for CFR and existing DO methods, given that we compute complexities in a similar manner.

In general, Double Oracle methods are known to be efficient in games with NE characterized by small support. To illustrate this, we compare the dominant term of CFR's complexity $\tilde{\mathcal{O}}(|S|^3 |A|/\epsilon^2)$ with PDO/AdaDO's complexity $\tilde{\mathcal{O}}(k|A|X^3/\epsilon^2)$, where $X$ is the largest size of the games constructed by the support of NEs. When the NE supports are small enough to satisfy $\tilde{\mathcal{O}}(|S|^3) > \tilde{\mathcal{O}}(kX^{3})$, and thus $\tilde{\mathcal{O}}(|S|^3|A|/\epsilon^2) > \tilde{\mathcal{O}}(kX^3|A|/\epsilon^2)$, the dominating term of sample complexities of PDO and AdaDO are less than that of CFR under this small NE support condition. Thus, this comparison confirms that Double Oracle methods exhibit lower sample complexity when the NE support is small. This represents, to our knowledge, the first discussion in DO literature regarding the conditions under which DO outperforms regret minimization methods theoretically.

\section{Double Oracle with Warm Starting and Stochastic Regret Minimizer}
In this section, we propose two improvements for all Double Oracle methods to reduce the complexity caused by $k$ and $X$ in the sample complexity of RMDO (equation \ref{eq:rmdo_sc}), as discussed at the end of section \ref{sec:existing_do}.
\subsection{Warm Starting}
\label{sec:dows}
\begin{figure}[t!]
    \centering
    \includegraphics[width=\textwidth]{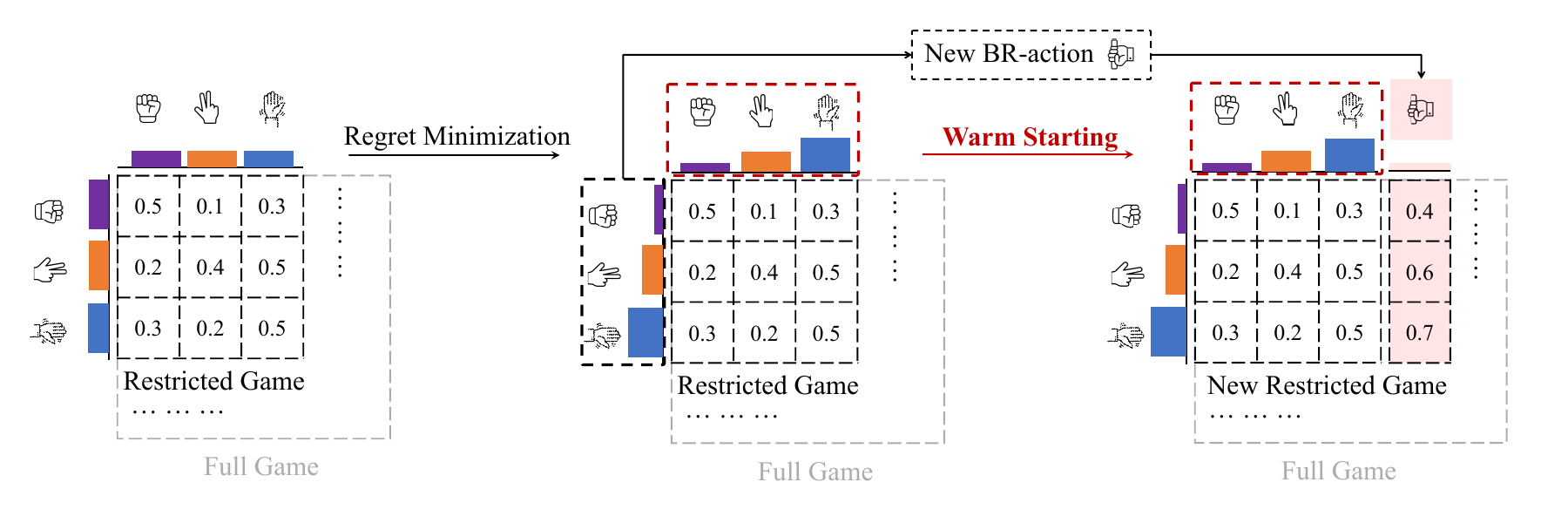}
    \caption{Restricted game expanding and warm starting of Regret Minimizing Double Oracle in EFGs. In restricted game, the regret minimizer will keep updating the regret and average strategy. After $m(\cdot)$ iterations of regret minimization, we compute best response actions (BR) against the restricted strategy (violet, orange and blue bars). If there are new BR actions, we expand the restricted game with them.}
    \label{fig:enter-label}
\end{figure}

Given that RMDO initiates training from scratch for each of the $k$ restricted games without transferring knowledge from the previous ones, this section introduces an approach to enhance convergence speed. We propose to integrate the Double Oracle framework with \textbf{Warm Starting}, a technique that involves transferring regret learned in prior restricted game to the later one for faster convergence. The empirical performance of this approach will be demonstrated in Section \ref{sec:experiments_ws}, where we will show substantial improvements in most games. \looseness=-1

We first revisit the details of restricted game expansion and current initialization methods in the Double Oracle framework. In Figure \ref{fig:enter-label}, nodes symbolize information states, and edges represent actions. Regret-Minimizing Double Oracle partitions the set of training iterations, $\{t|t=1,2,\cdots,T\}$ with time window, denoted as $T_j$, $j=1,2\cdots, k$. At iterations $t$, we represent the restricted game as $\mathbf{G}_t$, constructed by the grey lines in the figure. For a given $j$, we have $\mathbf{G}_t=\mathbf{G}_{t'}$ for all $t\in T_j$. The dotted lines signify actions not included in the restricted games, indicating that we will not traverse those branches when solving the restricted games. Consequently, the regret and average strategy undergo continuous updates within $T_j$ through regret minimizers. This update process concludes when computing best response actions against $\bar{\pi}$ reveals that $\mathbb{BR}(\bar{\pi}) \not\in \mathbf{G}_t$. RMDO employs $m(j)$, a mapping from the index $j$ of the current time window $T_j$ to the frequency of computing the best response within that window. This frequency denotes how often we perform the check. Suppose, at iteration $t$, we compute the best response and discover a new action not present in the current restricted game. In such a scenario, we add these new best response actions (depicted as orange edges) to $\mathbf{G}_t$, resulting in a new restricted game $\mathbf{G}_{t+1}$. Subsequently, a new time window $T_{j+1}$ is initiated.

When initiating a new restricted game, the standard procedure involves clearing the regret and initializing the average strategy with a uniform distribution over actions. This inhibits the utilization of prior knowledge acquired during the learning of previous restricted games. Consequently, it leads to the learning of $k$ independent games, resulting in the appearance or a high order of $k$ in sample complexity. Since the regret in the previous restricted games indicates the values of different branches. Clearing regrets upon entering a new restricted game necessitates relearning the values of old branches, which is inefficient and waste of computational resources. This inefficiency can be mitigated through warm starting, where RMDO no longer treats the solving of $k$ independent restricted games as separate tasks, but rather leverages prior knowledge for faster convergence. 

To enhance the exploitation of knowledge learned in previous restricted games, we propose the incorporation of warm starting for regret when entering a new restricted game. Specifically, we initialize the counterfactual regret of the actions which appeared in the previous restricted game with their previous regrets. Such initialization is reasonable since the restricted games have the relation of $\mathbf{G}_1 \subset \mathbf{G}_2 \subset \cdots \subset \mathbf{G}_k$. Additionally, We warm start the cumulative strategy with that of the previous restricted game. For new actions emerging in the current restricted game, we initialize their regret and strategy with fixed values $\varepsilon>0$. Suppose at iteration $t$ there is new added BR actions and new constructed restricted game $\mathbf{G}_j$. Thus we initialize the regret of $\forall s,a \in \mathbf{G}_j$ with \looseness=-1
\begin{align}
{R}_i^{t}(s,a) = \Bigg\{
    \begin{array}{cc}
         {R}_i^{t-1}(s,a) , & s,a \in \mathbf{G}_{j-1}  \\
         \varepsilon ,& \text{otherwise}
    \end{array}.
\end{align}
Although warm starting is designed intentionally to reduce the complexity caused by $k$, the theoretical improvements are hard to prove due to the inconsistent action and infoset space across different restricted games, and the way we initialize regret and compute average strategy. However, in section \ref{sec:exp}, we observe empirically in many research games that warm starting help faster convergence significantly.

Replacing cold starting with warm starting can implicitly mitigate the sample complexity caused by $k$. Another contribution to the sample complexity of RMDO is $X$. It arises from the need to traverse the entire restricted game tree during regret minimization. This becomes particularly pronounced in large games, where even the restricted game can become substantial in size. To tackle this issue, we propose a scalable RMDO framework, Stochastic Regret-Minimizing Double Oracle. This framework leverages stochastic regret minimization techniques, specifically Monte-Carlo Counterfactual Regret Minimization (MCCFR), and incorporates approximate best response methods. The objective of this framework is to address the scalability concerns associated with RMDO and enable more efficient handling of large-scale games.

\subsection{Stochastic Regret-Minimizing Double Oracle}
\label{sec:srmdo}

The core idea behind \textbf{Stochastic Regret-Minimizing Double Oracle (SRMDO)} is to replace regret minimization in the RMDO framework with stochastic regret minimization (SRM) for the restricted game solving. While computing the oracle Best Response (BR) still requires traversing the entire game tree, the number of times computing BR is significantly less than that of executing regret minimization. We also introduce the sample complexity associated with leveraging approximate BR in SRMDO to enhance its scalability. Importantly, we demonstrate that replacing the regret minimizer with Monte-Carlo Counterfactual Regret Minimization (MCCFR) can help AdaDO reduce the theoretical sample complexity by $\mathcal{O}(X)$.

In this paper, we employ outcome-sampling Monte-Carlo Counterfactual Regret Minimization (MCCFR) ~\citep{mccfr} as the stochastic regret minimizer. This choice is made to ensure maximum scalability, as outcome-sampling MCCFR is the only method in the CFR family that learns from bandit feedback. Regarding the approximate best responder, any method, including Reinforcement Learning or No-Regret Learning, can be applied. However, it must be capable of reaching high precision $\delta$-Best Response (with high probability), where $\delta < \epsilon$. The complexity of computing one iteration of approximate Best Response computation is denoted as $\mathcal{O}(H)$.
\begin{theorem} 
The Last-window average strategy of SRMDO has the following sample complexity to reach $\epsilon$-Nash Equilibrium when employing oracle best responses:
\begin{align}
\label{eq:srmdo_sc}
    \tilde{\mathcal{O}}( kH|A|X^2/\epsilon^2 + \sum_{j=1}^k H_{j} m(j) +   |A|X^3/\epsilon^2m(j)),
\end{align}
and the following sample complexity when employing $\delta$-BR approximate best responder:
\begin{align}
    \tilde{\mathcal{O}}( kH|A|X^2/(\epsilon - \delta)^2 + \sum_{j=1}^k H_{j} m(j) +   H|A|X^2/(\epsilon - \delta)^2m(j))
\end{align}
, where $\delta < \epsilon$ is required, $H_{j} = \max_i H_{i,j}$ is the largest horizon of the restricted game in $T_j$, and obviously $H_{j} \ll |S_{i,j}|$.
\label{theo:SRMDO_ebr}
\end{theorem}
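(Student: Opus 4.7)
The plan is to mirror the derivation behind Theorem \ref{theo:rmdo_sc} but substitute the stochastic regret minimizer's bound and per-iteration cost into each component of the complexity budget. I would start by decomposing the total sample cost into the same three pieces that appear in the RMDO analysis: (i) the cost of running the stochastic regret minimizer inside the final restricted game until its local exploitability drops below $\epsilon$, (ii) the cost of running it in each of the preceding $k-1$ windows between best-response calls, and (iii) the cost of actually computing the best responses. For (i) and (ii), I replace the CFR regret bound $\tilde{\mathcal{O}}(|S_i|\sqrt{|A|T})$ with the SRM bound $\tilde{\mathcal{O}}(\tfrac{1}{\delta_{\text{SRM}}}\Delta|S_i|\sqrt{|A|T})$ holding with probability $1-p$ (absorbing $\delta_{\text{SRM}}$ and $\log(1/p)$ into the $\tilde{\mathcal{O}}$ as in the preliminaries), and crucially swap the per-iteration cost from $\mathcal{O}(X)$ (full traversal) to $\mathcal{O}(H_j)$ (single sampled trajectory in window $j$). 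Inverting the regret bound for the final window gives $T_k = \tilde{\mathcal{O}}(|A|X^2/\epsilon^2)$ iterations, which when multiplied by the trajectory cost $H$ yields the leading $kH|A|X^2/\epsilon^2$ term (one copy per window because of the cold restart across windows, with $k$ absorbing the union bound over windows).

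Next I would handle the middle term. In window $j$, the stochastic regret minimizer is run for $m(j)$ sampled iterations before a BR call, contributing $H_j m(j)$ samples; summed over $j$ this gives the $\sum_j H_j m(j)$ piece. To account for the BR contribution, I would observe that with oracle best responses each BR call still requires a full traversal costing $\mathcal{O}(X)$, and the number of BR calls per window is $\tilde{\mathcal{O}}(|A|X^2/\epsilon^2 m(j))$ (iteration budget divided by BR period), giving the $|A|X^3/\epsilon^2 m(j)$ summand. Adding these three pieces and simplifying produces the stated bound \eqref{eq:srmdo_sc}.

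For the $\delta$-approximate BR variant, the approach is to carry a two-level error budget. Since each BR call incurs an additive error of $\delta$ in the exploitability estimate of the meta-game equilibrium, I would require the stochastic regret minimizer to reach local precision $\epsilon - \delta$ inside the restricted game so that the overall guarantee is $\epsilon$-NE; this replaces every $\epsilon^2$ denominator with $(\epsilon-\delta)^2$. Simultaneously, the cost of a single approximate BR call drops from $\mathcal{O}(X)$ to $\mathcal{O}(H)$, so the third term becomes $H|A|X^2/(\epsilon-\delta)^2 m(j)$. I would need to verify, via a union bound over the $\tilde{\mathcal{O}}(k \cdot |A|X^2/(\epsilon-\delta)^2 m(j))$ BR invocations and SRM calls, that the high-probability guarantees compose into an overall high-probability $\epsilon$-NE, which contributes only logarithmic factors swallowed by $\tilde{\mathcal{O}}$.

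The main obstacle I anticipate is handling the interaction between the stochastic regret bound and restricted-game cold restarts across the $k$ windows. Because the SRM bound is high-probability rather than in-expectation, I must take a union bound over all windows and all BR calls, ensuring the $\log(1/p)$ factors stay inside $\tilde{\mathcal{O}}$; this is routine but must be carried out carefully because the number of windows $k$ is data-dependent and bounded by Lemma \ref{lemma:x}. A secondary subtlety is ensuring that the approximation error of the $\delta$-BR does not corrupt the convergence proof of the inner regret minimizer: I would argue that, since the restricted game is fixed within a window and the BR is only used at window boundaries to expand the population, the $\delta$ slack enters additively at the meta-game level only, so the clean substitution $\epsilon \to \epsilon - \delta$ in the inner SRM budget is justified, provided $\delta < \epsilon$ as stated.
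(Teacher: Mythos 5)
Your proposal is correct and follows essentially the same route as the paper: bound the per-window iteration count by inverting the SRM regret bound (via the same local-NE-implies-global-NE contradiction used for RMDO), charge $H_j$ per stochastic iteration and $X$ (resp.\ $H$) per oracle (resp.\ approximate) best-response call, count $\tilde{\mathcal{O}}(|A|X^2/\epsilon^2 m(j))$ BR calls per window, and substitute $\epsilon \to \epsilon - \delta$ for the approximate-BR variant. The paper likewise absorbs the high-probability and union-bound factors into $\tilde{\mathcal{O}}$ without elaborating, so your extra care there is compatible with, and slightly more explicit than, the published argument.
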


The sample complexity of Stochastic Regret-Minimizing Double Oracle (SRMDO) has been significantly reduced by decreasing the power of $X$ in the term that is independent of the frequency function from $3$ to $2$. This reduction is crucial, as even with the optimal choice of frequency function, the power of the first term cannot be influenced. Additionally, it is observed that employing approximate Best Response can further reduce the remaining term from $\tilde{O}(X^3)$ to $\tilde{O}(X^2)$. However, the benefit introduced by approximate Best Response is relatively small, as selecting an appropriate frequency function can easily achieve the same power reduction of the later two terms in equation (\ref{eq:srmdo_sc}). Therefore, for the remainder of this paper, we will use oracle Best Response for simplicity.

It is also reasonable to apply periodic and adaptive frequency functions in SRMDO for $m(j)$, which we refer to as \textbf{Stochastic Periodic Double Oracle (SPDO)} and \textbf{Stochastic Adaptive Double Oracle (SADO)}, respectively. The extension of PDO to SPDO and won't be discussed in detail here. To get the sample complexity of SPDO, we only need to set $m(j)=c$ in equation \ref{eq:srmdo_sc}. We demonstrate SADO here since it requires a new adaptive function that can help decrease the power of $X$ in all the dominating terms from $3$ to less or equal to $2$.

\begin{theorem}[SADO]
By employing oracle best responses and the following frequency function
\begin{align}
    m(j) = \sqrt{\frac{|A_j|(\sum_i|S_{i,j}|)^3}{H_{j}\epsilon^2}},
\end{align}
the sample complexity of Last-window average strategy in SRMDO is
\begin{align}
\tilde{\mathcal{O}}( k|A|HX^2/\epsilon^2 + 2k\sqrt{|A|H}X^{1.5}/\epsilon).
\end{align}
\label{theo:srmdo_ada}
\end{theorem}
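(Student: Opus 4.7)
The plan is to treat Theorem \ref{theo:srmdo_ada} as an optimization problem sitting on top of Theorem \ref{theo:SRMDO_ebr}: the oracle-BR SRMDO bound
\[
\tilde{\mathcal{O}}\!\left( kH|A|X^2/\epsilon^2 + \sum_{j=1}^k H_{j} m(j) + |A|X^3/(\epsilon^2 m(j)) \right)
\]
is \emph{already} a valid upper bound for any admissible frequency schedule, so the only work left is to choose $m(j)$ to minimize the window-wise summand. First, I would verify that SADO is a legitimate instance of SRMDO: the proposed
\[
m(j) = \sqrt{|A_j|\bigl(\textstyle\sum_i|S_{i,j}|\bigr)^3 / (H_{j}\epsilon^2)}
\]
is positive and finite (after rounding to the nearest integer, which only loses constants absorbed in the $\tilde{\mathcal{O}}$), and it is measurable with respect to the statistics $|A_j|$, $|S_{i,j}|$, $H_j$ of the current restricted game, exactly the quantities AdaDO already queries from one traversal of $\mathbf{G}_t$.

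Next, I would do the per-window minimization. Inside the sum, fixing $j$, write $a_j = H_j$ and $b_j = |A_j|(\sum_i|S_{i,j}|)^3/\epsilon^2$; the quantity to minimize is $a_j m(j) + b_j/m(j)$, which by the AM-GM inequality attains its minimum $2\sqrt{a_j b_j}$ exactly at $m(j) = \sqrt{b_j/a_j}$. This is precisely the adaptive choice stated in the theorem, and plugging it back yields
\[
H_j m(j) + |A_j|\bigl(\textstyle\sum_i|S_{i,j}|\bigr)^3/(\epsilon^2 m(j)) \;\le\; 2\sqrt{H_j\, |A_j|\bigl(\textstyle\sum_i|S_{i,j}|\bigr)^3}/\epsilon .
\]
Summing over $j=1,\dots,k$ and using the global bounds $|A_j|\le|A|$, $H_j\le H$, $\sum_i|S_{i,j}|\le X$ from Lemma \ref{lemma:x} and Definition \ref{def:x} collapses the sum to $2k\sqrt{|A|H}\,X^{3/2}/\epsilon$. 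Adding back the frequency-independent term $kH|A|X^2/\epsilon^2$ from Theorem \ref{theo:SRMDO_ebr} gives the claimed
\[
\tilde{\mathcal{O}}\!\left( k|A|HX^2/\epsilon^2 + 2k\sqrt{|A|H}\,X^{1.5}/\epsilon \right).
\]

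The analysis is largely bookkeeping — the only subtle point, and what I would flag as the main obstacle, is making sure the $X$ inside $|A|X^3/(\epsilon^2 m(j))$ in the SRMDO bound is interpreted as the \emph{current-window} size $\sum_i|S_{i,j}|$ (and analogously for $|A_j|$) rather than the global bound $X$. In Theorem \ref{theo:SRMDO_ebr} these window quantities are upper-bounded by $X$ and $|A|$ for notational uniformity, but the AM-GM optimization is only tight when applied to the per-window statistics; otherwise picking a uniform $m(j)$ would already suffice and Proposition \ref{theo:adado_sc} would collapse to PDO. I would therefore unwind the proof of Theorem \ref{theo:SRMDO_ebr} just enough to expose the window-local form $H_j m(j) + |A_j|(\sum_i|S_{i,j}|)^3/(\epsilon^2 m(j))$ before applying AM-GM window-by-window, and only re-apply the global bounds $X,|A|,H$ after summation to match the statement. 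A brief remark that this choice matches the lower bound of the RMDO family (in the same sense as Proposition \ref{theo:adado_sc}) would close the proof, since AM-GM's tightness guarantees no other schedule can do better up to constants.
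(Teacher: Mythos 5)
Your proposal is correct and follows essentially the same route as the paper: the paper likewise unwinds the oracle-BR SRMDO bound to its window-local form $\sum_j [|A_j|(\sum_i|S_{i,j}|)^3/(\epsilon^2 m(j)) + |A_j|(\sum_i|S_{i,j}|)^2 H_j/\epsilon^2 + H_j m(j)]$, optimizes $m(j)$ per window (the AM-GM step you make explicit), and only then applies the global bounds $|A_j|\le|A|$, $H_j\le H$, $\sum_i|S_{i,j}|\le X$. The subtlety you flag about applying AM-GM to the per-window statistics rather than the globally bounded form is exactly how the paper's proof proceeds.
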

The theorem above indicate that despite the necessity for the full traversal of the game tree with oracle Best Response, the complexity of Stochastic Regret-Minimizing Double Oracle (SRMDO) with an appropriate adaptive frequency function can still be reduced by $\mathcal{O}(X)$.

\section{Experiments}
\label{sec:exp}
In this section, we first examine the efficiency of Periodic Double Oracle (PDO) compared to other baselines including XDO, XODO, and regret minimization methods for game solving. We also analyze the support of PDO in different games. Subsequently, we explore the effect of warm starting and Adaptive Double Oracle (AdaDO). Finally, we delve into the empirical performance of Stochastic Regret-Minimizing Double Oracle, encompassing Stochastic Periodic Double Oracle (SPDO) and Stochastic Adaptive Double Oracle (SADO).

To assess the efficiency in game solving, we evaluate performance through plots of exploitability (the distance to Nash Equilibrium) versus the number of visited nodes. Such plots offer a clear visualization of the required complexity to achieve specific precision of Nash Equilibrium. Here we only include the visited nodes in the algorithms including both (stochastic) regret minimization and best response computation, but exclude the complexity in computing the exploitability for fair comparison. Additionally, in the stochastic regret minimization setting, we provide the number of visited nodes required to reach a specific level of exploitability for different algorithms.

We conducted tests on perfect and imperfect-information extensive-form poker games, including Blotto, Kuhn Poker, Leduc Poker, and their variants, namely Large Kuhn Poker, Leduc Poker Dummy, and Leduc Poker with $10$ cards. Detailed descriptions of the games can be found in Appendix \ref{append:games}. The implementation is primarily based on the library OpenSpiel ~\citep{lanctot2019openspiel}. The selected baselines include Extensive-Form Fictitious Self-Player (XFP) ~\citep{nfsp} and Linear Counterfactual Regret Minimization (LCFR) ~\cite{discfr} for regret minimization setting, and Outcome-Sampling Monte-Carlo Counterfactual Regret Minimization (simplified as MCCFR) for the stochastic regret minimization setting.

\subsection{Periodic Double Oracle}
\begin{figure}[t!]
     \centering
    \includegraphics[width=.99\textwidth]{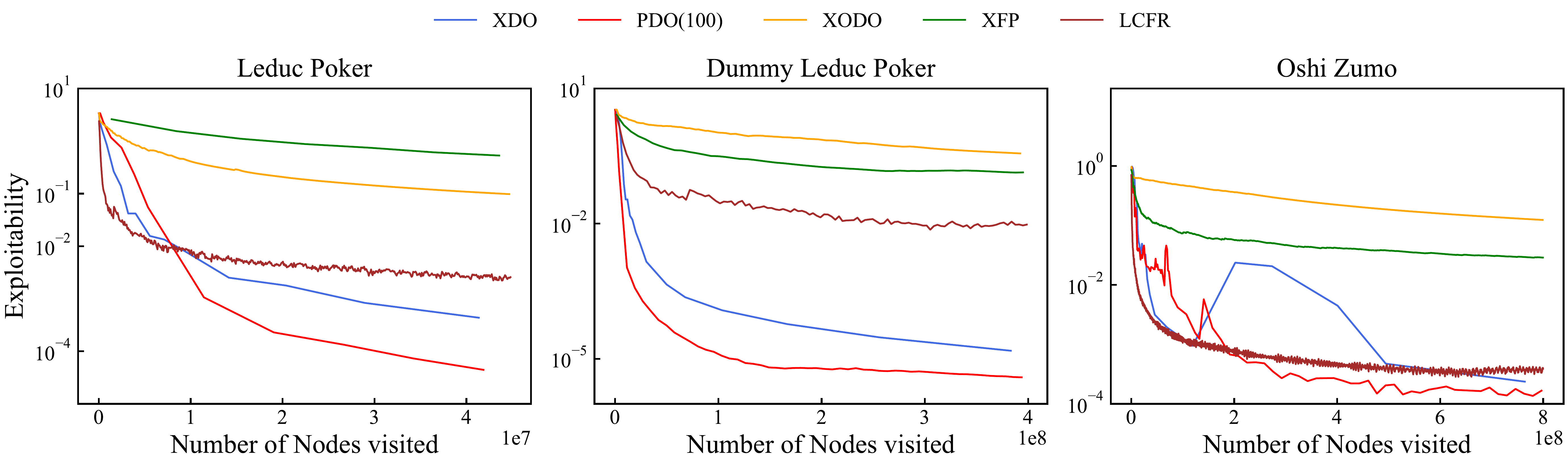}
\caption{Exploitability-Visited Nodes Performance of Extensive-form Double Oracle (XDO), Periodic Double Oracle with its periodicity, i.e. PDO($c$), Extensive-form Online Double Oracle (XODO), Extensive-form Fictitious Self-Play (XFP), and Linear Counterfactual Regret Minimization (LCFR). Our algorithm PDO achieves the lower exploitability than any other methods.}
  \label{fig:exploitability_all}
\end{figure}

We initially examine the performance of well-tuned Periodic Double Oracle (PDO) in comparison to baselines and existing Regret-Minimizing Double Oracle (RMDO) instances (Figure \ref{fig:exploitability_all}). Our findings reveal that PDO surpasses XDO, XODO, and baseline methods (XFP and LCFR) by a significant margin in games such as Leduc Poker, Leduc Poker Dummy, and Oshi Zumo. PDO has a more stable exploitability curve and a faster convergence in general compared to XDO. Specifically, PDO achieves $10^{-4}$ exploitability faster than any other methods, even though it may not perform as well as LCFR in the early stages of Leduc Poker and Oshi Zumo. In summary, PDO is capable of converging faster to lower exploitability.


\begin{table}[t!]
\centering
\begin{tabular}{@{}lllllllll@{}}
\toprule
Games                   & \multicolumn{2}{l}{\emph{Large Kuhn Poker}} &  \multicolumn{2}{l}{\emph{Leduc Poker}} & \multicolumn{2}{l}{\emph{Kuhn Poker}} & \multicolumn{2}{l}{\emph{Oshi Zumo}} \\ \midrule
                        & PDO                   & LCFR                       & PDO                          & LCFR      & PDO                          & LCFR     & PDO                         & LCFR     \\
Min. Support & 50\%                  & 100\%                     & \multicolumn{1}{c}{33\%}     & 100\%     & \multicolumn{1}{c}{50\%}     & 100\%    & \multicolumn{1}{c}{25\%}    & 100\%    \\
Avg. Support & 76\%                  & 100\%                  & \multicolumn{1}{c}{85\%}     & 100\%     & \multicolumn{1}{c}{83\%}     & 100\%    & \multicolumn{1}{c}{93\%}    & 100\%    \\ \bottomrule
\end{tabular}
\caption{Table of minimum and average support percentages of well-tuned Periodic Double Oracle (PDO) when first reaching $10^{-3}$-NE. Specifically, the minimum support percentage is defined as $\min_{s \in S} \text{supp}^{\bar{\pi}}(s) / |A(s)|$, and Average Support Percentage $\sum_{s \in S}\text{supp}^{\bar{\pi}}(s) / |A(s)||S|$. It is worth noting that even though the average support percentage is close to 1 in some games, the tree structure of Extensive-Form Games (EFGs) allows for efficient traversal, where only one action with zero probability can reduce the complexity of visiting the entire subtree rooted by the outcome state of this action. }
\label{table:support_full}
\end{table}

We then study the support of Double Oracle (DO) methods by investigating the support of average strategies of the well-tuned PDO when reaching $\epsilon$-NE, where $\epsilon=10^{-3}$. The metrics include the minimum support percentage, defined as $\min_{s \in S} \text{supp}^{\bar{\pi}}(s) / |A(s)|$, and Average Support Percentage $\sum_{s \in S}\text{supp}^{\bar{\pi}}(s) / |A(s)||S|$.

The analysis of the minimum support in Double Oracle (DO) strategies reveals a noteworthy efficiency compared to Linear CFR. The average strategy of Linear CFR consistently maintains full support, implying that it assigns non-zero probability to every action. In contrast, DO achieves significantly lower minimum support percentages. This efficiency is indicative of the learning process, as a lower support implies the need to learn the distribution over fewer actions, facilitating faster convergence. Additionally and notably, this table also implies that DO tends to produce more sparse solution (i.e. with less actions with positive probability). 

While the mean support may not exhibit a substantial difference from that of Linear CFR, it's crucial to note that this metric operates at the infostate level. In the broader context of the entire game tree of an Extensive-Form Game (EFG), actions excluded from the strategy contribute to pruning entire subtrees rooted by them. This efficient pruning mechanism is expected to enhance sample complexity significantly, showcasing the reason to the efficiency DO in games with small support NE.

\subsection{Adaptive Double Oracle and Warm Starting Double Oracle}
\label{sec:experiments_ws}
The evaluation of Adaptive Double Oracle (AdaDO) and warm starting (simplified as suffix -WS) in Figure \ref{fig:exploitability_ws_ada_all} provides insights into their efficiency compared to PDO and Linear CFR in different poker games. For PDO, we directly use a well-tuned periodicity $c=100$.

We first introduce the performance of AdaDO in Figure \ref{fig:exploitability_ws_ada_all}. In Blotto, Leduc Poker and Leduc Poker Dummy, and large Kuhn Poker, the exploitability of AdaDO decreases faster than that of PDO and LCFR. In other games, the exploitability of AdaDO converges to the same magnitude of exploitability to PDO. In summary, AdaDO performs better than PDO in most research games and significantly outperforms LCFR in all experiments. It is impressive to observe that AdaDO, not only avoid extensive tuning on the hyperparameter periodicity as PDO, but can perform faster convergence than PDO and other baselines as well. Since in Figure \ref{fig:exploitability_all}, PDO outperforms other baselines already, AdaDO is now the more efficient method that has the state-of-the-art performance.\looseness=-1

We then investigate the effectiveness of warm starting in Figure \ref{fig:exploitability_ws_ada_all}. In Blotto, Leduc Poker $10$ cards and Large Kuhn Poker, warm starting significantly reduces the exploitability of Double Oracle methods. Especially in Large Kuhn Poker, both PDO and AdaDO exhibit early drops in exploitability, reaching values over $10^8$ less than those of Linear CFR. This suggests that warm starting can accelerate the convergence of Double Oracle. In other games, warm starting has little impact on DO methods. In summary, warm starting overall has positive influence, and in specific games can help reduce the exploitability significantly faster and at most $8$ levels of magnitude less exploitable, i.e. higher precision. \looseness=-1

Overall, AdaDO, leveraging a theoretically optimal frequency function, performs faster convergence than a well-tuned PDO in most games. The impact of warm starting is positive in most games, being more pronounced in Large Kuhn Poker. Importantly, AdaDO achieves these results without hyperparameter tuning, highlighting its potential for efficient use. \looseness=-1

\begin{figure}[t!]
     \centering
    \includegraphics[width=.99\textwidth]{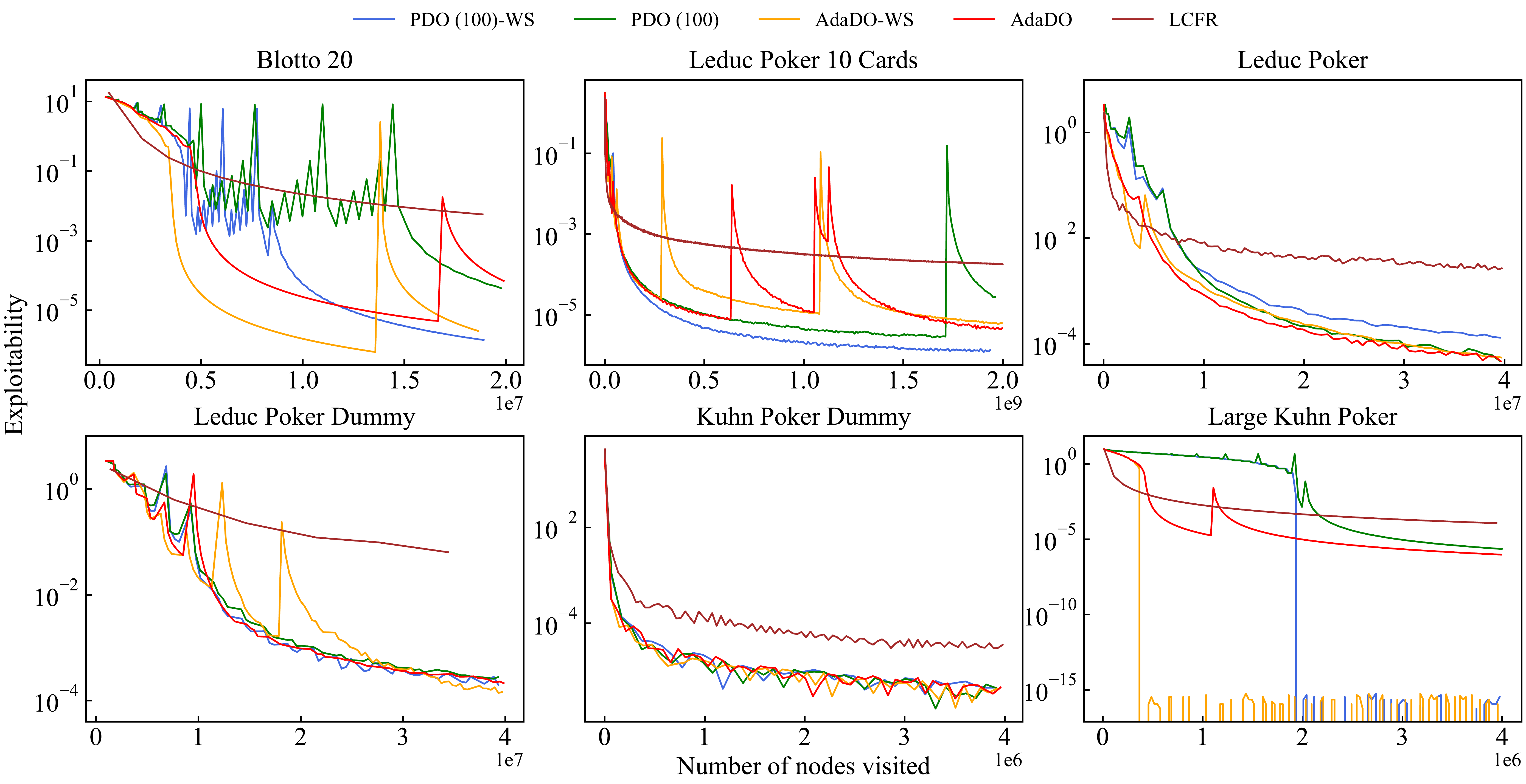}
\caption{Exploitability-Visited Nodes Performance of PDO with and without warm starting, AdaDO with and without warm starting, and LCFR. Warm starting help reduce exploitability significantly in Blotto and Large Kuhn Poker. AdaDO outperforms LCFR and PDO in most games.}
  \label{fig:exploitability_ws_ada_all}
\end{figure}

\subsection{Stochastic Regret-Minimizing Double Oracle}
In this section, We display empirical assessments of the Stochastic Regret-Minimizing Double Oracle (SRMDO), which leverage Outcome-Sampling MCCFR for restricted game solving. Our baseline is also Outcome-Sampling MCCFR, thus the comparison between them is fair. We employ oracle Best Response for all instances due to simplicity and also the fact that sample complexity won't change significantly by adopting approximate BR. 

\begin{figure}[t!]
    \centering
    \includegraphics[width=\textwidth]{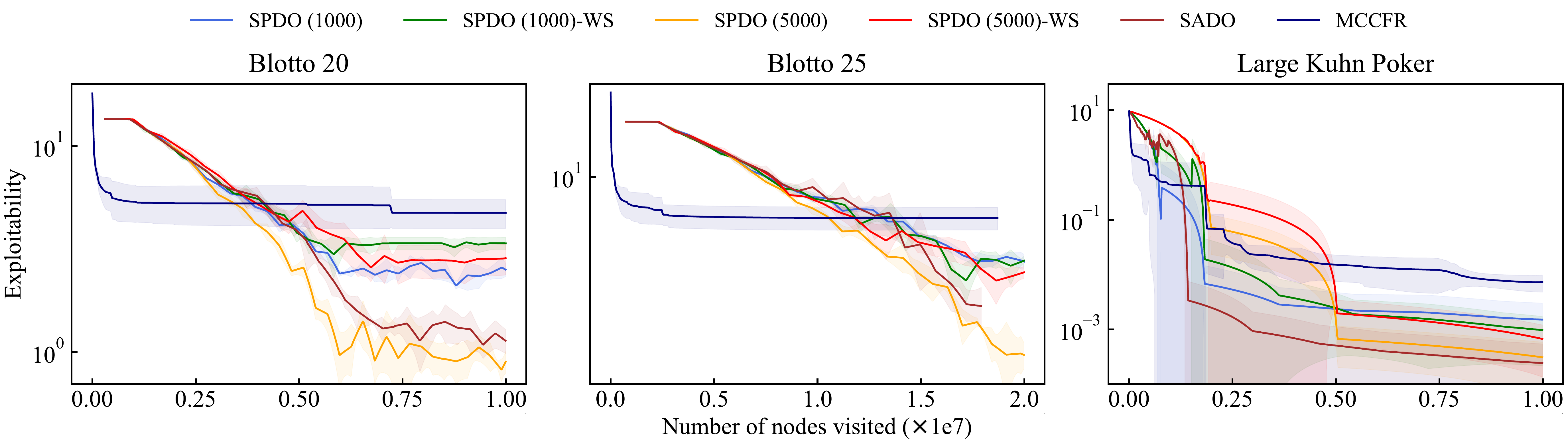}
    \caption{Exploitability experiments of Stochastic PDO (SPDO), and Stochastic Adaptive DO (SADO) with and without warm starting and Outcome-Sampling Monte-Carlo CFR (MCCFR). SPDO and SADO performs similarly good, and outperform MCCFR significantly.}
    \label{fig:srmdo}
\end{figure}

We study the performance of methods of SRMDO including Stochastic Periodic Double Oracle (SPDO) and Stochastic Adaptive Double Oracle (SADO) in Figure \ref{fig:srmdo}. In Blotto games, MCCFR experiences fast exploitability reduction in the early stages but gets stuck at low precision Nash Equilibrium. On the other hand, SPDO and SADO shows a significant improvement in exploitability at later stages, bringing its strategy closer to Nash Equilibrium compared to MCCFR. Specifically, in all experiments, SPDO with periodicity $5000$ and SADO converge to less exploitability than other algorithms. In Large Kuhn Poker, SADO outperforms SPDO, and surpass MCCFR by converging to $10$ times less exploitability solution. According to the results for the best SPDO and SADO to reach specific exploitability in Table \ref{table:srmdo}, SADO only needs less than half of visited nodes of SPDO and MCCFR to reach the same level of exploitability. According to the table, in Large Kuhn Poker and vanilla Kuhn Poker, SADO reaches exploitability $0.005$ and $0.0003$, respectively, with less than half number of visited nodes of SPDO (Table \ref{table:srmdo}). These results suggest the efficiency of Double Oracle and the potential of DO methods to scale. 

Though SADO performs similarly to SPDO in Blotto games and only outperform SPDO in Kuhn Pokers, the inefficiency of tuning periodicity hyperparameter in PDO is obvious. In Figure \ref{table:srmdo}, SPDO is sensitive to this periodicity since SPDO($1000$) and SPDO($5000$) has clearly distinct performance. This observation is crucial for understanding the challenges associated with tuning frequency hyperparameter in PDO and SPDO. The results highlight that the optimal choice of frequency can vary significantly among different games. So retuning $c$ is necessary when applying SPDO to new games, emphasizing the demanding nature of hyperparameter tuning in PDO. On the contrary, Stochastic Adaptive Double Oracle (SADO) is executed without extensive tuning but still demonstrates superior results.


\begin{table}[t!]
\centering

\begin{minipage}{0.5\textwidth}
\centering
\begin{tabular}{ccc}
\hline
                               & \multicolumn{2}{c}{\textit{Large Kuhn Poker}}                                                                                                                  \\
Exploitability                           & $0.5$                                                                         & $0.005$                                                                        \\ \hline
SPDO                           & $0.69$ \footnotesize{$(0.02)$}                               & $3.62 $ \footnotesize{($ 1.61$)}                              \\
\textbf{SADO} & \boldmath{$0.65 $} \footnotesize{($ 0.08$)} & \boldmath{$1.68 $} \footnotesize{($ 0.651$)} \\
MCCFR                          & \multicolumn{1}{l}{$0.69 $ ($ 0.62$)}                                         & $43.05$ \footnotesize{($ 44.88$)}                             \\ \hline
\end{tabular}
\end{minipage}%
\begin{minipage}{0.5\textwidth}
\centering
\begin{tabular}{ccc}
\hline
                               & \multicolumn{2}{c}{\textit{Kuhn Poker}}                                                                                                                       \\
Exploitability                           & $0.2$                                                                         & $0.0003$                                                                      \\ \hline
SPDO                           & $0.02 $ \footnotesize{($ 0.0 $)}                             & $95.48$ \footnotesize{($35.99$)}                             \\
\textbf{SADO} & \boldmath{$0.003$} \footnotesize{($0.001$)} & \boldmath{$34.59$} \footnotesize{($13.23$)} \\
MCCFR                          & $0.043 $ \footnotesize{($ 0.0$)}                             & $87.79$ \footnotesize{($75.33$)}                             \\ \hline
\end{tabular}
\end{minipage}%

\caption{Representative results of SRMDO: Number of visited nodes ($\times$1e6) of different algorithms to reach different exploitability, 5e-1 and 5e-3 in Large Kuhn Poker, and 2e-1 and 3e-4 in Kuhn Poker. SRMDO instances requires significantly less number of visited nodes to reach the same level of exploitability. \looseness=-1}
\label{table:srmdo}
\end{table}

Overall, the results highlight the effectiveness of SPDO and SADO. Additionally, the importance of hyperparameter tuning in PDO and SPDO emphasize the need for adaptive approaches like SADO to dispense the complexity of hyperparameter tuning. The potential advantages of SADO, including not requiring tuning and theoretical sample efficiency, make it a promising direction for addressing the challenges posed by extensive-form games.

\section{Conclusion}
In this paper we propose Regret-Minimizaing Double Oracle (RMDO) to unify the existing Double Oracle algorithms combined with regret minimization for theoretical study and further algorithmic improvement. Based on RMDO framework, we derive the sample complexity of existing methods Online Double Oracle and Extensive-Form Double Oracle, and prove that the later method, which is the state-of-the-art method for EFGs, suffers from exponential sample complexity. To address this problem, we propose two instances that only has polynomial sample complexity---Periodic Double Oracle, which requires hyperparameter tuning and Adaptive Double Oracle. These two innovative DO methods are provably more sample-efficient than the previous DO methods. To further make RMDO more sample efficient and scalable, we propose to adopt warm starting and stochastic regret minimizer for restricted game solving. In the empirical assessments, PDO can converge to lower exploitability than regret minimization methods and previous DO methods. While AdaDO which is less sensitive to the hyperparameter outperforms a well-tuned PDO in most environments. Moreover, AdaDO combined with warm starting and stochastic regret minimizer exhibits accelerated convergence in exploitability and reach the state-of-the-art performance in most games.

\newpage

\appendix
\section{Notations}

We offer the table of notations for easier understanding the proofs in the next sections.

\begin{table}[h]
\caption{Table of notations.}

\centering
\begin{tabular}{@{}ll@{}}
\toprule
Notations           & Descriptions                                                                      \\ \midrule
$s$                 & Information State/Information Set                                                 \\
$a$                 & Action                                                                            \\
$i$                 & Player                                                                            \\
$j$                 & Index of time windows/restricted games                                            \\
$t$                 & Iteration/time                                                                    \\
$\pi$               & Strategy/Policy                                                                   \\
$e(\pi)$            & Exploitability of $\pi$                                                           \\
$R_i^t(s, a)$       & Cumulative regret of player $i$ at iteration $t$ at infoset $s$ taking action $a$ \\
$\bar{\pi}^t(s, a)$ & Average strategy                                                                  \\
$A(s)$              & The set of action at infoset $s$                                                  \\
$S_i$               & The set of infosets of player $i$ in the original game                            \\
$A_{j}(s)$          & The set of action at infoset $s$ in time window $j$                               \\
$S_{i,j}$           & The set of infosets of player $i$ in time window $j$                              \\
$X$                 & $\max_{s \in S, \pi^* \in \Pi^*} \sum_s \text{supp}^{\pi^*}(s)$                   \\
$H_{i,j}$           & The largest length of all trajectories of player $i$ in $T_j$ in Stochastic RM    \\
$T$                 & Current time or current iterations of training                                    \\
$T_j$               & The $j$-th time window                                                            \\
$\mathbf{G}_j$      & The $j$-th restricted game, the restricted game in $T_j$                          \\
$k$                 & Number of time windows (restricted games)                                                              \\
$m(\cdot)$          & Frequency function                                                                \\ \bottomrule
\end{tabular}
\end{table}

\section{Proofs}
\label{app:proof}

\begin{theorem}
\label{theo:RMDO_oas_bound} 
In RMDO, suppose the regret minimizer has $\tilde{\mathcal{O}}(|S_i|\sqrt{|A|T})$ regret, the weighted-average regret bound of RMDO:
\begin{align}
\tilde{\mathcal{O}}(\sum_{j=0}^{k-2} \frac{|T_j|}{T}\cdot[m(j)-1] + \sum_{j=0}^{k-1}\frac{\sqrt{k}|S_{i,j}||T_j|}{T\sqrt{|T_j| - m(j) + 1}}),
\end{align}
converges to $0$ if $m(j)$ is sublinear in $T$.
\end{theorem}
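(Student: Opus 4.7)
The plan is to bound the cumulative external regret $R^T = \max_{\pi} \sum_{t=0}^T [v(\pi, \pi^t_{-i}) - v(\pi^t_i, \pi^t_{-i})]$ of the OAS $\bar{\pi}^T$ and then normalize by $T$. The first step is a window decomposition: by sub-additivity of $\max$ over the $k$ disjoint time windows, $R^T \leq \sum_{j=0}^{k-1} R^{T_j}$, where $R^{T_j} = \max_{\pi} \sum_{t \in T_j}[v(\pi, \pi^t_{-i}) - v(\pi^t_i, \pi^t_{-i})]$. Each per-window regret is then split into a ``restricted-game'' piece, controlled by the assumed CFR-style regret minimizer bound, plus a ``BR-staleness'' piece controlled by the best-response-check schedule $m(j)$.

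For $j < k-1$, I would split $T_j$ into a safe prefix $T_j^{\mathrm{safe}}$ consisting of all iterations up to and including the last BR-check within $T_j$ at which the best response against the cumulative within-window opponent average still lay in $\mathbf{G}_j$ (so the restricted game had not yet been forced to expand), and an unsafe tail of length at most $m(j)-1$ before the window ends. The safe prefix then has length at least $|T_j|-m(j)+1$. Because $v$ is multilinear in sequence-form strategies, $\max_{\pi} \sum_{t \in T_j^{\mathrm{safe}}} v(\pi, \pi^t_{-i}) = |T_j^{\mathrm{safe}}| \cdot \max_{\pi} v(\pi, \bar{\pi}^{\mathrm{safe}}_{-i})$, and the maximizing $\pi$ is exactly the BR against this cumulative average, which lies in $\mathbf{G}_j$ by construction of $T_j^{\mathrm{safe}}$. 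Hence the full-game regret on the safe prefix equals the restricted-game regret, which is bounded by the assumed $\tilde{\mathcal{O}}(|S_{i,j}|\sqrt{|A|(|T_j|-m(j)+1)})$ applied to this prefix. On the unsafe tail, each iteration's regret is bounded trivially by the payoff range $\Delta$, contributing at most $\tilde{\mathcal{O}}(m(j)-1)$ per window; the final window $T_{k-1}$ contributes only a safe-prefix bound since it never expands.

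Summing over $j$ and dividing by $T$, the staleness contributions, after matching the paper's weighted-average-regret convention in which each window $T_j$ carries weight $|T_j|/T$, produce the first summand $\sum_{j=0}^{k-2}\frac{|T_j|}{T}(m(j)-1)$. The safe-prefix contributions sum to $\sum_j \tilde{\mathcal{O}}(|S_{i,j}|\sqrt{|A|(|T_j|-m(j)+1)})/T$; applying Cauchy--Schwarz (equivalently Jensen's inequality on the concave $\sqrt{\cdot}$) across the $k$ windows extracts the $\sqrt{k}$ factor, and loosely rewriting $\sqrt{|T_j|-m(j)+1}$ as $|T_j|/\sqrt{|T_j|-m(j)+1}$ yields the second summand. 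Both terms vanish as $T\to\infty$ whenever $m(j)$ is sublinear in $T$: the first because $(m(j)-1)/T \to 0$ in each window, and the second because it scales like $O(\sqrt{k/T})$. The main obstacle is the safe/unsafe decomposition combined with the multilinearity step: pinpointing precisely which iterations of $T_j$ are safe under the BR-check schedule, and then invoking multilinearity of the sequence-form value to equate full-game regret with restricted-game regret on the safe prefix. A secondary subtlety is applying the regret bound to a prefix of the minimizer's run (standard for regret matching but worth an explicit remark), and matching the exact algebraic form of the stated bound -- in particular the $|T_j|/T$ weighting on the staleness term and the $|T_j|/\sqrt{|T_j|-m(j)+1}$ factor in the second term -- requires careful bookkeeping aligned with the paper's convention for weighted-average regret.
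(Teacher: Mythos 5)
Your proposal is essentially the argument this theorem rests on: the paper itself gives no proof beyond a citation to Theorem 3.3 of \citet{tang2023regret}, and that result is proved by exactly your ODO-style decomposition --- per-window subadditivity of regret, a safe prefix of length $|T_j|-m(j)+1$ on which multilinearity of the sequence-form value lets the best response against the within-window average (which by the BR-check schedule still lies in $\mathbf{G}_j$) reduce full-game regret to restricted-game regret, plus a stale tail of at most $m(j)-1$ iterations bounded by the payoff range. The only caveat is bookkeeping at the end: your derived bound is actually tighter than the stated one (your tail term is $(m(j)-1)/T$ rather than $\tfrac{|T_j|}{T}(m(j)-1)$, and the $\sqrt{k}$ is not really produced by Cauchy--Schwarz, which the paper applies \emph{on top of} this bound in Proposition~\ref{theo:xodo_bound}), but since both discrepancies only weaken the stated expression, the claimed upper bound still follows.
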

\begin{proof}
    Please refer to the proof of Theorem 3.3 in the previous work ~\citep{tang2023regret}. The only difference is we replace $S_i$ with $S_{i,j}$, which can be easily shown reasonable.
\end{proof}

\begin{lemma}[Required Iterations]
\label{lemma:ri}
    It requires in the worst case the following number of iterations for LAS of RMDO to reach $\epsilon$-NE:
    \begin{align}
     \sum_{j=1}^k\tilde{\mathcal{O}}(|A_{j}| (\sum_i |S_{i,j}|)^2/\epsilon^2 + m(j) - 1).
\end{align}
\end{lemma}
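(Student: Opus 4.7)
The plan is to bound the iteration count window-by-window and then sum over $j=1,\ldots,k$. Within time window $T_j$, RMDO runs a CFR-style regret minimizer on the restricted game $\mathbf{G}_j$, whose infoset space has size $|S_{i,j}|$ for player $i$ and whose per-infoset action set has size at most $|A_{j}|$. The standard CFR regret bound gives $R^{|T_j|}_i \leq \Delta|S_{i,j}|\sqrt{|A_j||T_j|}$, and the folklore two-player zero-sum identity then upper-bounds the local exploitability of the in-window average strategy by $\tfrac{1}{|T_j|}\sum_i R^{|T_j|}_i$. Solving for $|T_j|$ shows that a local $\epsilon$-NE of $\mathbf{G}_j$ is reached after at most $\tilde{\mathcal{O}}(|A_j|(\sum_i|S_{i,j}|)^2/\epsilon^2)$ iterations.

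Next I would show that each window is no longer than this threshold plus the best-response-check slack $m(j)-1$. Once the local $\epsilon$-NE condition holds, the next BR check (at most $m(j)-1$ iterations later, since BR is only invoked on multiples of $m(j)$) either uncovers a full-game BR action that lies outside $\Pi_t$, in which case the window closes and a new window $T_{j+1}$ begins, or it does not, in which case this is necessarily the terminal window because no further expansion ever occurs. Either way, $|T_j| \leq \tilde{\mathcal{O}}(|A_j|(\sum_i|S_{i,j}|)^2/\epsilon^2) + m(j) - 1$, and summing over $j=1,\ldots,k$ produces the claimed expression.

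The substantive step I expect to be the \textbf{main obstacle} is the local-to-global upgrade in the terminal window $T_k$: I need to show that LAS being a local $\epsilon$-NE of $\mathbf{G}_k$ together with the oracle BR returning no new action implies that $\tilde{\pi}^t$ is a global $\epsilon$-NE. The argument is that if the full-game best response against $\tilde{\pi}^t_{-i}$ is realized inside $\mathbf{G}_k$, then the local and global best-response values coincide, so the local exploitability bound transfers verbatim to the full game. This must be done carefully because LAS averages strictly within $T_k$, and the restricted-game regret accumulated before window $T_k$ must be discarded cleanly; I anticipate this being the only part of the proof that is not entirely routine, as the remaining ingredients (CFR's regret bound, the zero-sum exploitability inequality, and the additive $m(j)-1$ slack per window) plug in mechanically.
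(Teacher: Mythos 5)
Your proposal is correct and follows essentially the same route as the paper's proof: bound each window's length by the CFR-threshold for a local $\epsilon$-NE plus the $m(j)-1$ slack until the next best-response check, use the fact that a BR check at that point either expands the population (closing the window) or certifies that the local $\epsilon$-NE is a global one, and sum over the $k$ windows. The paper packages the window-length bound as a proof by contradiction while you argue it directly, but the substance — including the local-to-global transfer you flag as the main obstacle — is the same.
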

\begin{proof}
    We first prove that, before reaching the $\epsilon$-NE of the original game, in each windows $T_j,\ j<k-1$, the number of iterations of the last-window average strategy $|T_j|<\tilde{\mathcal{O}}(|A_{j}| (\sum_i |S_{i,j}|)^2/\epsilon^2 + m(j)-1)$ if the regret minimizer has $\tilde{\mathcal{O}}(\sqrt{|A_{i, j}}|S_{i, j}|/|T_j|)$ regret, where $|A_{j}|=\max_{i\in \mathcal{P}, s \in S}|A_{i,j}(s)|$. $A_{i,j}=\max_{s\in S_{i,j}}A_{i,j}(s)$ and $S_{i,j}$ are defined as usual, the set of actions and information sets, respectively. But the index $j$ here specify the time window. We prove this claim by contradiction:

\begin{claim}
    For $j =1,2,\cdots, k-1$, $|T_j|< \tilde{\mathcal{O}}(|A_{j}| (\sum_i |S_{i,j}|)^2/\epsilon^2 + m(j)-1)$.
\end{claim}
\begin{claimproof}
    Suppose the claim above is not correct, which is $|T_j|\geq\tilde{\mathcal{O}}(|A_{j}| (\sum_i |S_{i,j}|)^2/\epsilon^2 + m(j)-1)$, thus the average strategy in current window already reaches the restricted game's local $\epsilon$-NE. Then there must exist a BR computing right after $\tilde{\mathcal{O}}(|A_{j}|(\sum_i|S_{i,j}|)^2/\epsilon^2 + m(j)-1)$ iterations of regret minimization in current window. Suppose this iteration is the $t$-th one. Since $\mathbf{BR}(\bar{\pi}^t)\in \Pi_j$ and $\bar{\pi}^t$ has reached restricted game's local $\epsilon$-NE, $\bar{\pi}^t$ is also the $\epsilon$-NE in the original game, which cause contradiction since in $T_j$, $j<k$, the average strategy does not reach $\epsilon$-NE. Therefore the claim is correct.
\end{claimproof}

Then we prove the rest, for $j=k-1$, regret minimizer in the worst case needs $\tilde{\mathcal{O}}(|A_k| (\sum_i |S_{i,k}|)^2/\epsilon^2 + m(j)-1)$ iterations to reach $\epsilon$-NE. Since we have upper bound of number of iterations in each window before the last one. Training with a larger number of iterations will only approximate the NE closer. Thus we can sum up the required iterations in all time window, we get that in the worst case the required number of iterations for the LAS of RMDO to reach $\epsilon$-NE is:
\begin{align}
     \sum_{j=1}^k\tilde{\mathcal{O}}(|A_{j}| (\sum_i |S_{i,j}|)^2/\epsilon^2 + m(j) - 1).
\end{align}
\end{proof}

\begin{lemma}[Sample Complexity]
\label{lemma:sc}
The sample complexity of LAS in RMDO is:
\begin{align}
    \sum_{j=1}^k\tilde{\mathcal{O}}[|A_{j}| (\sum_i |S_{i,j}|)^3/\epsilon^2m(j) + &|A_{j}| (\sum_i |S_{i,j}|)^3/\epsilon^2 + \sum_i|S_{i,j}|m(j)]
\end{align}


\end{lemma}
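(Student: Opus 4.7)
The plan is to decompose the total sample complexity of the last-window average strategy into per-window contributions, then sum over windows $j = 1, \ldots, k$. Within each window $T_j$ the cost splits into two sources: the CFR regret-minimization steps carried out inside the restricted game $\mathbf{G}_j$, and the periodic best-response computations triggered every $m(j)$ iterations. The target bound will emerge by multiplying the iteration count from Lemma~\ref{lemma:ri} by the per-step traversal cost for each source, and then combining residual low-order terms.

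First I would invoke Lemma~\ref{lemma:ri} to bound the iterations spent in window $T_j$ as $|T_j| \in \tilde{\mathcal{O}}(|A_{j}|(\sum_i|S_{i,j}|)^{2}/\epsilon^{2} + m(j)-1)$. Since one CFR iteration in $\mathbf{G}_j$ performs a depth-first traversal of the restricted game tree and updates the regret and strategy at each visited infoset, its per-iteration cost is $\mathcal{O}(\sum_i|S_{i,j}|)$. Multiplying yields a regret-minimization contribution in window $j$ of $\tilde{\mathcal{O}}(|A_j|(\sum_i|S_{i,j}|)^{3}/\epsilon^{2} + (m(j)-1)\sum_i|S_{i,j}|)$, which accounts for the second term of the claimed bound plus a residual proportional to $(m(j)-1)\sum_i|S_{i,j}|$ that will be reused later.

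Next I would bound the best-response contribution. In window $T_j$ there are at most $\lceil |T_j|/m(j)\rceil$ BR invocations, and a single BR evaluation over the current restricted structure costs $\mathcal{O}(\sum_i|S_{i,j}|)$ samples to traverse. Combining with the iteration bound from the previous step and simplifying gives $\tilde{\mathcal{O}}(|A_j|(\sum_i|S_{i,j}|)^{3}/(\epsilon^{2}m(j)) + \sum_i|S_{i,j}|)$, which matches the first term of the claimed expression and contributes an additional $\sum_i|S_{i,j}|$ residual.

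Finally I would sum the two per-window contributions and merge the residuals: the $(m(j)-1)\sum_i|S_{i,j}|$ from the CFR phase combines with the $\sum_i|S_{i,j}|$ from the BR phase into $m(j)\sum_i|S_{i,j}|$, recovering the third term. Summation over $j = 1, \ldots, k$ then gives the stated expression. The main obstacle is justifying that the per-step cost of the BR oracle in window $j$ is $\mathcal{O}(\sum_i|S_{i,j}|)$ rather than the full-game $\mathcal{O}(|S|)$: since BR must in principle discover actions outside the current restricted set in order to trigger an expansion, one needs to argue, exploiting perfect recall and the sequence-form structure, that the relevant traversal remains bounded by the current restricted game size up to factors absorbed in the $\tilde{\mathcal{O}}(\cdot)$ notation.
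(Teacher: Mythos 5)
Your proposal follows essentially the same route as the paper's proof: invoke Lemma~\ref{lemma:ri} for the per-window iteration count, multiply by the per-iteration traversal cost $\sum_i|S_{i,j}|$ for the CFR phase, multiply the $|T_j|/m(j)$ best-response invocations by the same per-call cost, and sum the residuals into the $\sum_i|S_{i,j}|\,m(j)$ term. The one caveat you raise at the end --- that the best-response oracle is defined over the original game and so its per-call cost being $\mathcal{O}(\sum_i|S_{i,j}|)$ rather than $\mathcal{O}(|S|)$ needs justification --- is a fair observation, but the paper's proof makes the same assumption without further argument, so your treatment matches theirs.
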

\begin{proof}
In general, there are two parts contribute to the complexity, regret minimization and Best Response computing. We will compute the sample complexity of both separately.

We have proved the required iterations in Lemma \ref{lemma:ri}. Since the sample complexity of each iteration is $\sum_i|S_{i,j}|$, then we have the complexity of regret-minimization part is 
\begin{align}
     \sum_{j=1}^k\tilde{\mathcal{O}}(|A_{j}| (\sum_i |S_{i,j}|)^3/\epsilon^2 + \sum_i|S_{i,j}|m(j)-\sum_i|S_{i,j}|).
\end{align}

Then we compute the complexity in BR computing. In $T_j$, $j=1,2,\cdots,k$, we compute BR every $m(j)$ iterations of regret minimization, thus by multiplying total times of BR computation and the complexity in each computation, we can get the BR-computing part of sample complexity:
\begin{align}
    \sum_{j=1}^k\tilde{\mathcal{O}}(|A_{j}| (\sum_i |S_{i,j}|)^3/\epsilon^2m(j) + \sum_i|S_{i,j}| - \sum_i|S_{i,j}|/m(j)))
\end{align}

Sum up the above two parts of the complexity we can get the final complexity:
\begin{align}
\label{rmdo_sc}
    \sum_{j=1}^k\tilde{\mathcal{O}}(|A_{j}| (\sum_i |S_{i,j}|)^3/\epsilon^2m(j) &+ \sum_i|S_{i,j}| - \sum_i|S_{i,j}|/m(j)) \nonumber \\ 
    + \sum_{j=1}^k\tilde{\mathcal{O}}(|A_{j}| (\sum_i |S_{i,j}|)^3/\epsilon^2 &+ \sum_i|S_{i,j}|m(j)-\sum_i|S_{i,j}|), \nonumber \\
    \leq \sum_{j=1}^k\tilde{\mathcal{O}}[|A_{j}| (\sum_i |S_{i,j}|)^3/\epsilon^2m(j) &+ |A_{j}| (\sum_i |S_{i,j}|)^3/\epsilon^2 + \sum_i|S_{i,j}|m(j)]
\end{align}
We take the upper bound as the sample complexity, since if the number of sample points of the algorithm reach RHS of inequality (\ref{rmdo_sc}), we can guarantee that the LAS of RMDO reaches $\epsilon$-NE. This is common way of analyzing the theoretical sample complexity in game solving ~\citep{bai2022near}.
\end{proof}

\begin{customlemma}{\ref{lemma:x}}

$\min_{\pi\in \Pi^*}\max_{s\in S}\text{supp}^{\pi}(s) < k \leq \sum_i |S_{i,k}|\leq X \leq |S| =\sum_i |S_i|$.
\end{customlemma}

\begin{proof}
We proof from left to right:
\begin{itemize}
    \item Since there is at most one pure strategy at a infoset added to the population every time when the restricted game is expanded and a new window starts. Thus at $s\in S$, $|\{a|a\in A(s)\}| \leq k$. Since at every infoset, the number of pure strategies in the converged population will be greater than the support of NE. Otherwise, there is a pure strategy in the NE strategy but not in the population, which means that the population doesn't converge and leads to contradiction. At $s\in S$, denote $\Pi^*$ as a set of NE of this game, $|\{a|a\in A(s)\}| \geq \min_{\pi\in \Pi^*}\text{supp}^{\pi}(s)$. So we have $k\geq \max_{s\in S}|\{a|a\in A(s)\}| \geq \max_{s\in S}\min_{\pi\in \Pi^*}\text{supp}^{\pi}(s)$.
    \item The number of restricted games is less than or equal to the number of infosets in the largest possible restricted game since at least one new action added, i.e. $\forall j=1,\cdots,k-1$, $\sum_i |S_{i,j+1}| - \sum_i |S_{i,j}| \geq 1$. Therefore, $\sum_i |S_{i,k}| \geq k$.
    \item It's easy to see that $X$ is the number of infosets in the largest possible restricted game constructed by NE. Thus, for any restricted game, the number of the infosets is smaller than or equal to $X$.
    \item And since such game is constructed by the actions from the original game, it is less than the number of the infosets in the original games.
\end{itemize}
\end{proof}

\begin{customthm}{\ref{theo:rmdo_sc}}
The sample complexity of last-window average strategy of RMDO to reach $\epsilon$-NE is:
\begin{align}
\tilde{\mathcal{O}}( k|A|X^3/\epsilon^2 + \sum_{j=1}^k|A|X^3/\epsilon^2m(j) +  Xm(j))
\end{align}
\end{customthm}

\begin{proof}
According to Lemma \ref{lemma:x}, we have $\forall j=1,2,\cdots,k,\ |A_j| \leq |A|$ and $\sum_i |S_{i,j}|\leq X$. Then according to Lemma \ref{lemma:sc}, the sample complexity to reach $\epsilon$-NE is 
\begin{align}
\sum_{j=1}^k\tilde{\mathcal{O}}\Big(|A_{j}| (\sum_i |S_{i,j}|)^3/\epsilon^2m(j) + |A_{j}| (\sum_i |S_{i,j}|)^3/\epsilon^2 + \sum_i|S_{i,j}|m(j)\Big) \nonumber \\
\leq \tilde{\mathcal{O}}( k|A|X^3/\epsilon^2 + \sum_{j=1}^k|A|X^3/\epsilon^2m(j) +  Xm(j)).
\end{align}
For easier comparison between algorithms and without loss of generality, we take the upper bound as the sample complexity.
\end{proof}

\begin{customthm}{\ref{theo:xodo_bound}}
    The sample complexity for XODO to reach $\epsilon$-NE is $\tilde{\mathcal{O}}(2X^3k^2/\epsilon^2)$.
\end{customthm}

\begin{proof}
Setting $m(j)=1$ in Theorem \ref{theo:RMDO_oas_bound}, the upper bound will be $\tilde{\mathcal{O}}(|S_{i,k}|\sqrt{k}\sum_j\sqrt{|T_j|}/T)$. According to Cauchy-Schwartz inequality, $\sum_j\sqrt{|T_j|}\leq \sqrt{k\sum_j |T_j|}=\sqrt{kT}$, then the upper bound becomes $\tilde{\mathcal{O}}(|S_{i,k}|k/\sqrt{T})$. Thus the required iteration $T$ to reach $\epsilon$-NE satisfies:
\begin{align}
    \sum_i \tilde{\mathcal{O}}(|S_{i,k}|k/\sqrt{T}) \leq \epsilon.
\end{align}
For clarity and easy for comparison between games, as usual we let the upper bound of LHS satisfy the above condition, which means:
\begin{align}
    T \geq \tilde{\mathcal{O}}(k^2X^2/\epsilon^2)
\end{align}
\end{proof}



\begin{customthm}{\ref{theo:SRMDO_ebr}}
The Last-window average strategy of SRMDO has the following sample complexity to reach $\epsilon$-Nash Equilibrium when employing \textbf{oracle Best Response}.
\begin{align}
    \tilde{\mathcal{O}}( kH|A|X^2/\epsilon^2 + \sum_{j=1}^k H_{j} m(j) +   |A|X^3/\epsilon^2m(j)),
\end{align}
and the following sample complexity when employing $\delta$-BR:
\begin{align}
    \tilde{\mathcal{O}}( kH|A|X^2/(\epsilon - \delta)^2 + \sum_{j=1}^k H_{j} m(j) +   H|A|X^2/(\epsilon - \delta)^2m(j))
\end{align}
, where it is required that $\delta<\epsilon$, $H_{j} = \max_i H_{i,j}$ is the largest horizon of the restricted game in $T_j$, and obviously $H_{j} \ll |S_{i,j}|$, typically $H_{j} = \tilde{\mathcal{O}}(\log(|S_{i,j}|))$.
\end{customthm}

\begin{proof}
Similarly, before reaching the $\epsilon$-NE of the original game, in each windows $T_j,\ j<k-1$, the number of iterations of the last-window average strategy $|T_j|<\tilde{\mathcal{O}}(|A_{j}| (\sum_i |S_{i,j}|)^2/\epsilon^2 + m(j)-1)$ if the regret minimizer has $\tilde{\mathcal{O}}(\sqrt{|A_{j}}|S_{i, j}|/|T_j|)$ regret. Moreover, regret minimizer in the worst case needs $\tilde{\mathcal{O}}(|A_k| (\sum_i |S_{i,k}|)^2/\epsilon^2 + m(j)-1)$ iterations to reach $\epsilon$-NE. However, when computing the sample complexity, unlike RMDO, the complexity of each iteration reduce from $\sum_i|S_{i,j}|$ to $H_{j}$, where $H_{j}$ indicates the longest length of the trajectories in the interactiosn between stochastic regret minimizer and the environments.

Then we get the complexity of compute Best Responses. The number of times computing BR in each window $T_j$ is $\tilde{\mathcal{O}}(|A_{j}| (\sum_i |S_{i,j}|)^2/\epsilon^2m(j) + 1 - 1/m(j))$, and the complexity of each round of BR computing is $\tilde{\mathcal{O}}(\sum_i|S_{i,j}|)$.

Then we collect all parts of the complexity to get:
\begin{align}
    \sum_{j=1}^k\tilde{\mathcal{O}}(|A_{j}| (\sum_i |S_{i,j}|)^3/\epsilon^2m(j) + \sum_i|S_{i,j}| - \sum_i|S_{i,j}|/m(j)) \nonumber \\ 
    + \sum_{j=1}^k\tilde{\mathcal{O}}(|A_{j}| (\sum_i |S_{i,j}|)^2H_{j}/\epsilon^2 + H_{j}m(j)-H_{j}), \nonumber \\
    \leq \sum_{j=1}^k\tilde{\mathcal{O}}[|A_{j}| (\sum_i |S_{i,j}|)^3/\epsilon^2m(j) + |A_j| (\sum_i |S_{i,j}|)^2H_{j}/\epsilon^2 + H_{j}m(j)]  \label{eq:real_srmdo_ada_sc}\\
    \leq \tilde{\mathcal{O}}( kH|A|X^2/\epsilon^2 + \sum_{j=1}^k H m(j) +   |A|X^3/\epsilon^2m(j))
\end{align}

As for the $\delta$-BR approximate Best Responder, it's similar to the proof above, except in each window, $|T_j|<\tilde{\mathcal{O}}(|A_{j}| (\sum_i |S_{i,j}|)^2/(\epsilon - \delta)^2 + m(j)-1)$ to ensure the following: we still can derive contraction via the fact that $\delta$-BR of the average strategy ($\epsilon-\delta$-NE) in current window still lies in current restricted game leading to $\epsilon$-NE in the original game. Additionally, approximate BR only have complexity $\tilde{\mathcal{O}}(H)$ in each time of computation. Thus it's easy to derive the complexity:
\begin{align}
    \tilde{\mathcal{O}}( kH|A|X^2/(\epsilon - \delta)^2 + \sum_{j=1}^k H_{j} m(j) +   H|A|X^2/(\epsilon - \delta)^2m(j))
\end{align}
\end{proof}

\begin{customthm}{\ref{theo:srmdo_ada}}
By employing exact Best Response and the following frequency function
\begin{align}
    m(j) = \sqrt{\frac{|A_j|(\sum_i|S_{i,j}|)^3}{H_{j}\epsilon^2}},
\end{align}
the sample complexity of Last-window average strategy in SRMDO is
\begin{align}
\tilde{\mathcal{O}}( k|A|HX^2/\epsilon^2 + 2k\sqrt{|A|H}X^{1.5}/\epsilon).
\end{align}
\end{customthm}

\begin{proof}
According to the Proof to Theorem \ref{theo:SRMDO_ebr}, we have in equation (\ref{eq:real_srmdo_ada_sc}), the sample complexity of SRMDO is:
\begin{align}
\sum_{j=1}^k\tilde{\mathcal{O}}[|A_{j}| (\sum_i |S_{i,j}|)^3/\epsilon^2m(j) + |A_j| (\sum_i |S_{i,j}|)^2H_{j}/\epsilon^2 + H_{j}m(j)],
\end{align}
which can be proved with the similar idea in proof to Theorem \ref{theo:adado_sc} that has the following lower bound:
\begin{align}
\sum_{j=1}^k\tilde{\mathcal{O}}[2\sqrt{H_j|A_{j}|} (\sum_i |S_{i,j}|)^{1.5}/\epsilon + |A_j| (\sum_i |S_{i,j}|)^2H_{j}/\epsilon^2]
\label{eq:real_ado_sc}
\end{align}
when
\begin{align}
m(j) = \sqrt{\frac{|A_j|(\sum_i|S_{i,j}|)^3}{ H_{j}\epsilon^2}}.
\end{align}
As usual, we treat the tight upper bound of equation (\ref{eq:real_ado_sc}) as the sample complexity, which is
\begin{align}
\tilde{\mathcal{O}}( k|A|HX^2/\epsilon^2 + 2k\sqrt{|A|H}X^{1.5}/\epsilon).
\end{align}
\end{proof}

\section{Games Descriptions}
\label{append:games}
\paragraph{Blotto} Sequential perfect-information game \citep{tang2023regret} is a revised sequential version of discrete Colonel Blotto Game. At the outset, each participant possesses a distinct array of forces varying in strength, which they sequentially deploy onto the battlefield for engagement. Precisely, under the parameter configuration $20$, each combatant commands $20$ forces, with strengths ranging from $0$ to $19$, and engages in $4$ successive deployments, thus constituting two complete combat cycles. During each deployment iteration, players alternately select forces for deployment. The outcome of each fight is the difference in strengths between the forces present on the battlefield. Upon conclusion of a round, the deployed forces are removed, paving the way for the commencement of the subsequent round. The payoff of the game is the summation of the outcomes of all rounds.
\paragraph{Large Kuhn Poker} It is a variant of Kuhn Poker created by ~\citet{tang2023regret}. It is only different from normal Kuhn Poker with an initial pot for each player $40$. Players can bet any remaining amount.

\paragraph{Leduc Poker 10 Card} It is the same as vanilla Leduc Poker except the initial number of cards is $10$.

\paragraph{Leduc Poker Dummy} It is the same as vanilla Leduc Poker except the actions in each information set are duplicated once~\citep{xdo}.

\paragraph{Oshi zumo} It is a board game ~\citep{buro2004solving}, where two players have 4 coins in the beginning of the game. There is a token in the middle of a board with length $2K$ + 1. $K$ in our case is 6. Players make action of bidding with the coins they have (at least $1$). Then the player bid more is able to push the token one step toward its opponent. The objective is to push the token off the opponent side of the board. Payoff is $1$ for the winner and $-1$ for the loser.

\section{Additional Experimental Results}

We offer additional exploitability plots of the experimental results in this section.
\begin{figure}[h]
    \centering
    \includegraphics[width=0.9\textwidth]{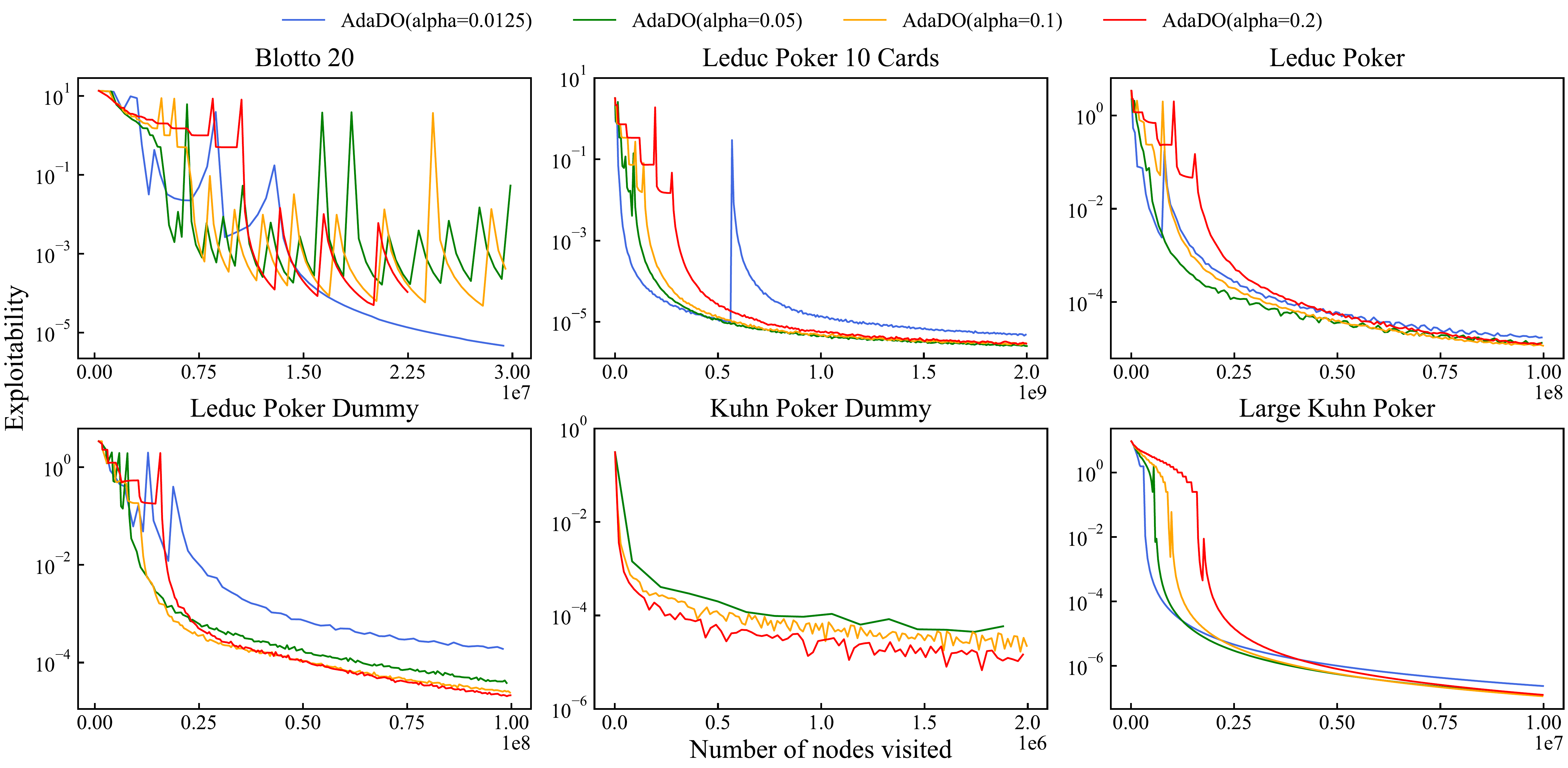}

    \caption{Ablation study of the discount factor $\alpha$ in the practical AdaDO.}
\end{figure}

\begin{figure}[h]
    \centering
    \includegraphics[width=0.9\textwidth]{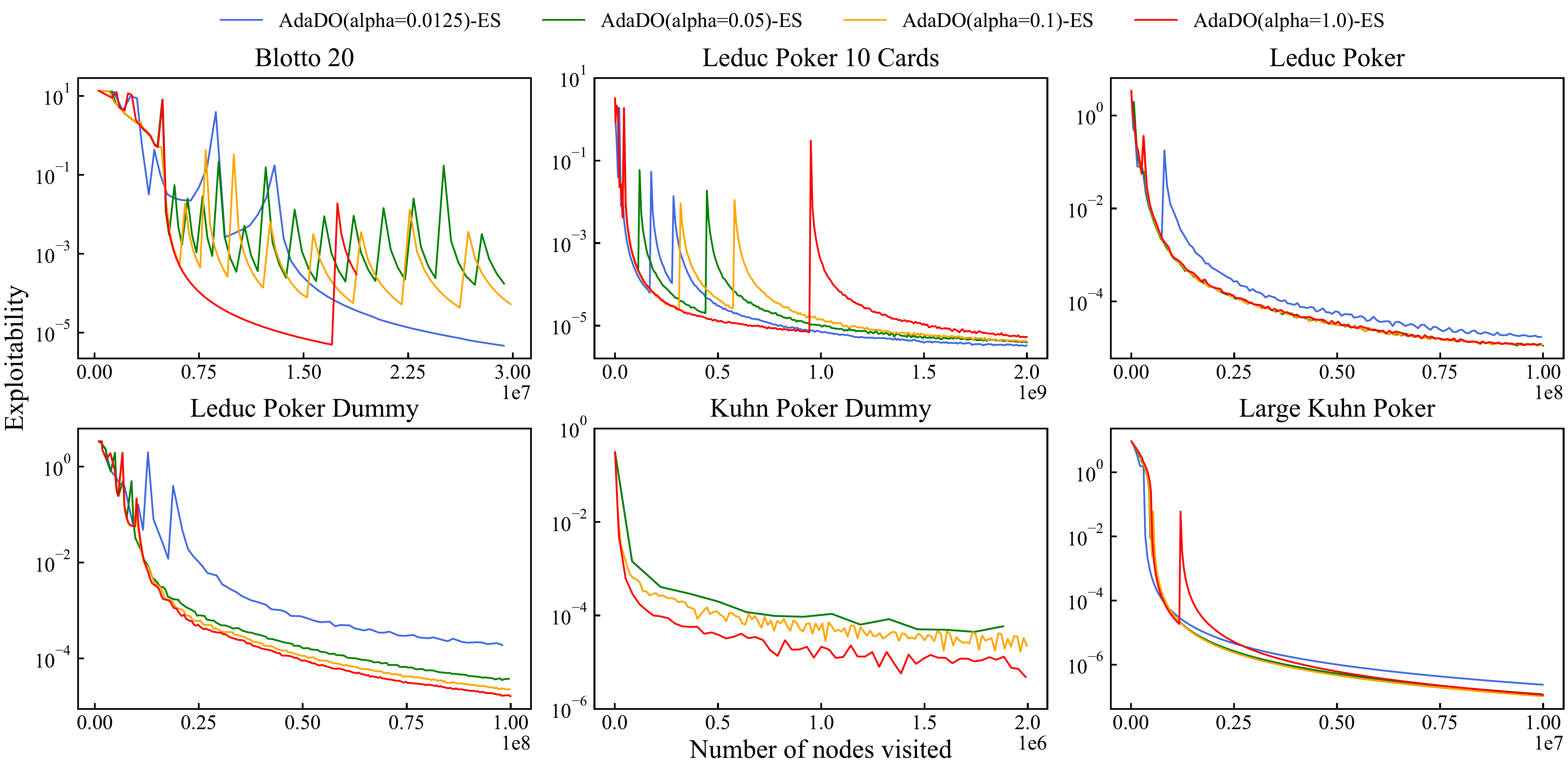}

    \caption{Ablation study of the discount factor $\alpha$ in the practical AdaDO with early stopping trick.}
\end{figure}

\vskip 0.2in
\newpage
\bibliography{sample}

\end{document}